\def\bsl{\mathbf{A}}
\def\bsls{\mathbf{A}}
\newtheorem{theorem}{Theorem}
\newtheorem{definition}{Definition}
\newtheorem{lemma}{Lemma}
\newtheorem{remark}{Remark}
\newcommand{\R}{\mathbb{R}}
\newcommand{\Z}{\mathbb{Z}}
\newcommand{\sinc}{\operatorname{sinc}}
\def\DE{\stackrel{\mathrm{def}}{=}}
\def\j{\jmath}
\def\ind{\chi}
\def\l{\left (}
\def\r{\right )}
\def\SC{*_{\mathbf{A}}}
\def\SDC{\star_{\bsls}}
\newcommand{\nvec}[1]{ {\underline{#1}}}
\newcommand{\Lu}[1]{\lambda_{\mathbf{A}} \left( #1 \right)}
\newcommand{\Ld}[1]{{\overline\lambda}_{\mathbf{A}} \left( #1 \right)}
\newcommand{\Nu}[1]{\eta_{\mathbf{A}} \left( #1 \right)}
\newcommand{\Nd}[1]{{\overline\eta}_{\mathbf{A}} \left( #1 \right)}
\newcommand{\dtsaft}[1]{\widehat{#1}_\mathbf{A}\left( \omega \right)}
\newcommand{\up}[2]{ {\overset{\lower0.5em\hbox{$\smash{\scriptscriptstyle\rightharpoonup}$}} {{#1}}} \left( {#2} \right)}
\newcommand{\dn}[2]{ {\overset{\lower0.5em\hbox{$\smash{\scriptscriptstyle  \leftharpoonup}$}} {{#1}}} \left( {#2} \right)}
\newcommand{\DEq}[1]{\stackrel{(\ref{#1})}{=}}
\newcommand{\DTo}[1]{\stackrel{(\ref{#1})}{\to}}
\DeclareSymbolFont{symbols}{OMS}{fncsy}{m}{n}
\DeclareMathAlphabet{\mathcal}{OMS}{fncm}{m}{n}
\DeclareSymbolFontAlphabet{\Bbb}{AMSb}
\begin{document}

\par\noindent {\LARGE\bf Shift--Invariant and Sampling Spaces Associated with the \\ Special Affine Fourier Transform \par}

%\tnotetext[mytitlenote]{Footnote here}

%% Group authors per affiliation:
%AUTHORS
{\vspace{4mm}\par\noindent
Ayush Bhandari$^\dag$ and Ahmed I. Zayed$^\ddag$
\par\vspace{2mm}\par}

{\vspace{2mm}\par\it
\noindent $^\dag$Media Laboratory, Massachusetts Institute of Technology\\
$\phantom{^{\dag}}$Cambridge, MA 02139--4307 USA. \par}

{\vspace{2mm}\par\it
\noindent $^\ddag$Department of Mathematical Sciences, \\ 
$\phantom{^{\ddag}}$DePaul University, Chicago, IL 60614-3250\par}

{\vspace{4mm}\par\noindent $\phantom{^{\dag,\ddag}}$\rm 
Email: \textrm{ayush@MIT.edu $\bullet$ \ azayed@condor.depaul.edu } \par}

{\vspace{7mm}\par\noindent\hspace*{5mm}\parbox{150mm}{\small \textbf{Abstract: }
%\boldmath
The Special Affine Fourier Transformation or the SAFT generalizes a number of well known unitary transformations as well as signal processing and optics related mathematical operations. Shift-invariant spaces also play an important role in sampling theory, multiresolution analysis, and many other areas of signal and image processing. Shannon's sampling theorem, which is at the heart of modern digital communications,  is a special case of sampling in shift-invariant spaces. Furthermore, it is well known that the Poisson summation formula is equivalent to the sampling theorem and that the Zak transform is closely connected to the sampling theorem and the Poisson summation formula. These results have been known to hold in the Fourier transform domain for decades and were recently shown to hold in the Fractional Fourier transform domain by A. Bhandari and A. Zayed.

The main goal of this article is to show that these results also hold true in the SAFT domain. We provide a short, self--contained proof of Shannon's theorem for functions bandlimited in the SAFT  domain and then show that sampling in the SAFT domain is equivalent to orthogonal projection of functions onto a subspace of bandlimited basis associated with the SAFT domain. This interpretation of sampling leads to least--squares optimal sampling theorem. Furthermore, we show that this approximation procedure is linked with convolution and semi--discrete convolution operators that are associated with the SAFT domain. We conclude the article with an application of fractional delay filtering of SAFT bandlimited functions.}\par\vspace{3mm}}

%\begin{keyword}
%Fractional Fourier transform, reproducing kernels, sampling spaces, semi-discrete convolution, shift-invariant spaces, Poisson summation formula, fractional Zak transform.
%\end{keyword}

%\newpage
\tableofcontents
 \begin{spacing}{1.5}
\newpage

%\linenumbers

\section{Introduction}

The Special Affine Fourier Transformation (SAFT), which was introduced in \cite{Abe1994}, is an integral transformation associated with a general inhomogeneous lossless
linear mapping in phase-space  that depends on six parameters independent of the phase-space coordinates. It maps the position $x$ and the wave number $k$ into
 \begin{equation}
\left[ {\begin{array}{*{20}{c}}
  {x'} \\
  {k'}
\end{array}} \right] = \left[ {\begin{array}{*{20}{c}}
  a&b \\
  c&d
\end{array}} \right]\left[ {\begin{array}{*{20}{c}}
  x \\
  k
\end{array}} \right] + \left[ {\begin{array}{*{20}{c}}
  p \\
  q
\end{array}} \right]
\label{SAFT}
 \end{equation}
with
\begin{equation}ad-bc=1.
\label{SAFT2}
\end{equation}
This transformation, which can model many general optical systems \cite{Abe1994},  maps any convex body into another convex body and (\ref{SAFT2}) guarantees that the area of the body is preserved by the transformation.
Such transformations form the inhomogeneous special linear group ISL$(2,\R).$
%
%<><><><><><><><><><><><><><><><><><><><><><><><><><><>
 % 		PARAMETERS OF LCT AND TRANSFORMATION
 %<><><><><><><><><><><><><><><><><><><><><><><><><><><>
\begin{table}[!b]
\centering
%\caption{Linear Canonical Transform as a generalization of other well known transformations.}
\caption{ \textsf{SAFT, Unitary Transformations and Operations}}
%\begin{tabular*}{0.58\textwidth}{p{3.8cm} p{5.85cm}}%{ll}
\begin{tabular*}{0.75\textwidth}{p{5cm} p{6.15cm}}%{ll}
%\begin{tabular*}{0.5\textwidth}{p{4cm} p{6cm}}%{ll}
\toprule
%\multicolumn{2}{c}{%
%$\boxed{\phi _{\pmb{\Lambda}}  \left( {t,\omega } \right) = \tfrac{1}
%{{\sqrt {-j2\pi b} }}\exp \left\{ { - \tfrac{j}
%{{2b}}\left( {\left( {at^2  + d\omega ^2 } \right) - 2\omega t} \right)} \right\} \qquad (1)}$} \\  
%\addlinespace
%(r){1-2}
%
\rowcolor{yellow!35} 
\hline
SAFT Parameters $\left(\bsls \right) $ & Corresponding Unitary Transform \\
\hline
\addlinespace
$\bigl[ \begin{smallmatrix} a &b  & \vline & & {0}   \\
 c& d & \vline& &{0} \end{smallmatrix} \bigr] = \bsls_\textsf{LCT} $				&  \textbf{Linear Canonical Transform} \\	[2.5pt]

$\bigl[ \begin{smallmatrix} &\cos\theta&\sin\theta  & \vline & & {p}   \\
 -&\sin\theta&\cos\theta & \vline& &{q} \end{smallmatrix} \bigr] = {\bsls}_\theta^O $				&  \textbf{Offset Fractional Fourier Transform} \\	[2.5pt] 

$\bigl[ \begin{smallmatrix} &\cos\theta&\sin\theta  & \vline & & {0}   \\
 -&\sin\theta&\cos\theta & \vline& &{0} \end{smallmatrix} \bigr] = {\bsls}_\theta $				&  \textbf{Fractional Fourier Transform} \\	[2.5pt] 
%      			& each      \\

$\bigl[ \begin{smallmatrix} &0 &1& \vline & & {p}  \\-&1&0 & \vline & & {q} \end{smallmatrix} \bigr] 
= {\bsls}_\text{FT}^O$  													& \textbf{Offset Fourier Transform (FT)}  \\		[2.5pt] 

$\bigl[ \begin{smallmatrix} &0 &1& \vline & & {0}  \\-&1&0 & \vline & & {0} \end{smallmatrix} \bigr] 
= {\bsls}_\text{FT}$  													& \textbf{Fourier Transform (FT)}  \\		[2.5pt] 
$\bigl[ \begin{smallmatrix} 0 & \j & \vline & & {0} \\ \j & 0 & \vline & & {0} \end{smallmatrix} \bigr] 
= {\bsls}_\text{LT}$ 													& \textbf{Laplace Transform (LT)}  \\	[2.5pt] 
$\bigl[ \begin{smallmatrix} \j \cos\theta &\j \sin\theta & \vline & & {0}  \\
 \j \sin \theta & -\j\cos\theta & \vline & & {0} \end{smallmatrix} \bigr] $  								& \textbf{Fractional Laplace Transform}   \\ 	[2.5pt] 
$\bigl[ \begin{smallmatrix} 1 & b & \vline & & {0}  \\0&1 & \vline & & {0} \end{smallmatrix} \bigr]$  						& \textbf{Fresnel Transform}  \\	[2.5pt] 
$\bigl[ \begin{smallmatrix} 1&\jmath b & \vline & & {0}  \\  \jmath&1 & \vline & & {0} \end{smallmatrix} \bigr]$				& \textbf{Bilateral Laplace Transform}  \\	[2.5pt] 
$\bigl[ \begin{smallmatrix} 1&-\jmath b & \vline & & {0}  \\ 0&1 & \vline & & {0} \end{smallmatrix} \bigr]$, $b \ge 0$  		& \textbf{Gauss--Weierstrass Transform}   \\	[2.5pt] 
$\tfrac{1}{{\sqrt 2 }} \bigl[\begin{smallmatrix} 0 & e^{ - {{\jmath\pi } 
\mathord{\left/{\vphantom {{j\pi } 2}} \right.\kern-\nulldelimiterspace} 2}} & \vline & & {0} 
\\-e^{ - {{\jmath\pi } \mathord{\left/{\vphantom {{j\pi } 2}} \right.\kern-\nulldelimiterspace} 2}} 
&1 & \vline & & {0} \end{smallmatrix} \bigr]$  													& \textbf{Bargmann Transform} \\[2.5pt] 
%
%\midrule
\addlinespace
\hline
\rowcolor{yellow!35} 
SAFT Parameters $\left(\bsls \right) $ & Corresponding Signal Operation \\
\hline
%\midrule
\addlinespace
$\bigl[ \begin{smallmatrix} 1/\alpha& 0  & \vline & & {0}   \\
 0 & \alpha & \vline& &{0} \end{smallmatrix} \bigr] = {\bsls}_\alpha $				&  \textbf{Time Scaling} \\	[2.5pt] 

$\bigl[ \begin{smallmatrix} 1 & 0  & \vline & & {\tau}   \\
 0 & 1 & \vline& &{0} \end{smallmatrix} \bigr] = {\bsls}_\tau $				&  \textbf{Time Shift} \\	[2.5pt]

$\bigl[ \begin{smallmatrix} 1 & 0  & \vline & & {0}   \\
 0 & 1 & \vline& &{\xi} \end{smallmatrix} \bigr] = {\bsls}_\xi $				&  \textbf{Frequency Shift} \\	[2.5pt] 

\addlinespace
\hline
\rowcolor{yellow!35} 
SAFT Parameters $\left(\bsls \right) $ & Corresponding Optical Operation \\
\hline
%\midrule
\addlinespace
$\bigl[ \begin{smallmatrix} &\cos\theta&\sin\theta  & \vline & & {0}   \\
 -&\sin\theta&\cos\theta & \vline& &{0} \end{smallmatrix} \bigr] = {\bsls}_\theta $			&  \textbf{Rotation} \\	[2.5pt] 

$\bigl[ \begin{smallmatrix} 1 & 0  & \vline & & {0}   \\
 \tau & 1 & \vline& &{0} \end{smallmatrix} \bigr] = {\bsls}_\tau $				&  \textbf{Lens Transformation} \\	[2.5pt] 
  
 $\bigl[ \begin{smallmatrix} 1 & \eta  & \vline & & {0}   \\
 0 & 1 & \vline& &{0} \end{smallmatrix} \bigr] = {\bsls}_\eta $				&  \textbf{Free Space Propagation} \\	[2.5pt] 
 
 $\bigl[ \begin{smallmatrix} e^{\beta} & 0  & \vline & & {0}   \\
 0 & e^{-\beta}  & \vline& &{0} \end{smallmatrix} \bigr] = {\bsls}_\beta $				&  \textbf{Magnification} \\	[2.5pt] 
 
  $\bigl[ \begin{smallmatrix} \cosh\alpha & \sinh\alpha  & \vline & & {0}   \\
\sinh\alpha & \cosh\alpha  & \vline& &{0} \end{smallmatrix} \bigr] = {\bsls}_\eta $				&  \textbf{Hyperbolic Transformation} \\	[2.5pt] 
\bottomrule
\end{tabular*}
\label{tab:1}
%\vspace{-0.5cm}
\end{table}
 The SAFT offers a  unified viewpoint of known signal processing transformations as well as optical operations on light waves. We have parametrically summarized these operations in Table~\ref{tab:1}.

% For example,
%\begin{eqnarray*} g_1(\theta)& = &\left(\begin{array}{ll} \cos \theta & \sin \theta \\ -\sin \theta & \cos \theta \end{array}  \right) \quad (\mbox{rotation}),\nonumber \\
%g_2(\theta)& = &\left(\begin{array}{ll} 1 & 0 \\ \theta & 1 \end{array}  \right) \quad (\mbox{lens transformation}),\\
% g_3(\theta)& = & \left(\begin{array}{ll} 1& \theta \\ 0  & 1\end{array}  \right) \quad (\mbox{ free space propagation} ),\\
%g_4(\theta)& = & \left(\begin{array}{ll} e^{\theta} & 0 \\ 0 & e^{-\theta} \end{array}  \right) \quad (\mbox{magnification}),\\
% g_5(\theta)& = & \left(\begin{array}{ll} \cosh \theta & \sinh \theta \\ \sinh \theta & \cosh \theta \end{array}  \right) \quad (\mbox{hyperbolic transformation}).
% \end{eqnarray*}
The integral representation of the wave-function transformation linked with the transformation (\ref{SAFT}) and (\ref{SAFT2})
is given by,
\begin{align}
\label{SAFTD}
F(\omega)&=\hat{f}_{\bsls}(\omega)=\int_{\R} k(t,\omega)f(t)dt \qquad \l\mbox{SAFT of } f\l t \r \r\\
&= \frac{1}{\sqrt{2\pi |b|}}\int_{\R}
\exp\left\{\frac{j}{2b}\left( at^2+d\omega^2-2t\omega+2pt+2(bq-dp)\omega\right)\right\}f(t) dt, \notag
\end{align}
where $\bsls$  stands for the six parameters $(a,b,c,d,p, q),$ and
$$ k(t,\omega)=\frac{1}{\sqrt{2\pi |b|}}
\exp\left\{\frac{j}{2b}\left( at^2+d\omega^2-2t\omega+2pt+2(bq-dp)\omega\right)\right\}.$$

 The inversion formula for the SAFT is easily shown to be
 \begin{equation}
f(t)=\frac{1}{\sqrt{2\pi |b|}}\int_{\R}
F(\omega) \exp\left\{\frac{-j}{2b}\left( d\omega^2+at^2-2t\omega+ 2\omega(bq-dp)+2pt\right)\right\} dt\omega,
\end{equation}
which may be considered as the SAFT evaluated using
 ${\bf A}^{-1}$
where\footnote{With a little abuse of notation, we use $\nvec{\lambda} ^{ - 1}$ which should be understood as a parameter vector corresponding to the inverse--SAFT.},
\[{\bf A}^{-1} \DE \left[ \bsls^{-1} | \nvec{\lambda}^{-1} \right]\equiv \left[ {\begin{array}{*{20}{c}}
  { + d}&{ - b}&\vline & {bq - dp} \\
  { - c}&{ + a}&\vline & {cp - aq}
\end{array}} \right]\]
and to be precise,
\[ {\bf A}^{-1} = \left[ {\begin{array}{*{20}{c}}
  +d&{ - b} \\
  { - c}& + a
\end{array}} \right]{\text{ and }}{\nvec{\lambda} ^{ - 1}}  \DE  \left[ {\begin{array}{*{20}{c}}
  {bq - dp} \\
  {cp - aq}
\end{array}} \right].\]

 We also have $$  \langle f, g\rangle= \int_{\R} f(t)\overline{g}(t)dt=\int_{\R} F(\omega)\overline{G}(\omega) d\omega =\langle F, G\rangle ,$$ from which we obtain
 $\left\|f\right\|=\left\|F\right\|.$
When $p=0=q,$ we obtain the homogeneous special group  SL$(2,\R),$  which is
represented by the unimodular matrix
\[ \mathbf{M} = \left[ {\begin{array}{*{20}{c}}
  a&b \\
  c&d
\end{array}} \right].\]
The associated integral transform, which is called the {Linear Canonical Transform} (LCT), is given by
$$ F_{\mathsf{LCT}}(\omega)=\frac{1}{\sqrt{2\pi |b|}}\int_{\R}
\exp\left\{\frac{j}{2b}\left( at^2+ d\omega^2-2t\omega \right)\right\}f(t)  dt. $$

The linear canonical transform  has been used to solve problems in physics and quantum mechanics; see \cite{Moshinsky2003}. It includes several known transforms as special cases. For example, for $a=0=d, b=-1, c=1,$ we obtain the Fourier transform and for $a=\cos \theta=d, b=\sin \theta =-c,$ we obtain the fractional Fourier transform. The  Laplace, Gauss-Weierstrass, and Bargmann transforms are also special cases.
The inversion formula for the LCT is given by
\begin{equation}
f(t)=\frac{1}{\sqrt{2\pi |b|}}\int_{\R}
\exp\left\{\frac{-j}{2b}\left( d\omega^2+ at^2-2t\omega \right)\right\}F(\omega)  d\omega.
\end{equation}
If  the LCT of $f$ and $g$ are denoted by $F$ and $G,$ respectively, then Parseval's relation holds
$$\langle f, g \rangle=\int_{\R}f(x)\overline{g}(x) dx = \int_{\R}F(t)\overline{G}(t) dt=\langle F, G \rangle.$$
Let $$\mathbf{M}_1=\begin{bmatrix} a_1 & b_1\\c_1 & d_1\end{bmatrix} , \quad \mathbf{M}_2=\begin{bmatrix} a_2 & b_2\\c_2 & d_2\end{bmatrix} , $$
so that
$$
\mathbf{M}_{21} =   \begin{bmatrix} a_2 & b_2\\c_2 & d_2\end{bmatrix}\begin{bmatrix} a_1 & b_1\\c_1 & d_1\end{bmatrix}=  \begin{bmatrix} a_2a_1+ b_2c_1 &
a_2b_1 +b_2 d_1 \\c_2a_1+d_2c_1 & c_2b_1+ d_2d_1\end{bmatrix}.
$$
If the LCT corresponding to $\mathbf{M}_1,\mathbf{M}_2, \mathbf{M}_{21} $ are denoted by $ \mathcal{L}_1, \mathcal{L}_2, \mathcal{L}_{21},$ respectively, it can be shown that the composition
relation $\mathcal{L}_2\mathcal{L}_1=C \mathcal{L}_{21}, $ holds,
where $C$ is a constant. On the other hand, the composition relations is associative, that is,
 $$\left(\mathcal{L}_3\mathcal{L}_2\right)\mathcal{L}_1 =\mathcal{L}_3\left(\mathcal{L}_2\mathcal{L}_1 \right).$$

The analogue of Shannon sampling theorem for the fractional Fourier transform, the linear canonical transform, and the SAFT  were obtained in \cite{Zayed1996,Zayed1999,Zhao2009,Xia1996,Stern2007,Xiang2012,Xiang2013}. Although the sampling theorem can be easily obtained in a direct way, we will obtain it as a special case of more general results.

 Our goal is to extend key harmonic analysis results to the SAFT analogous to those for the FT and FrFT  and obtain the sampling theorem as a by-product. For example, in the Fourier transform domain, it is known that Shannon sampling theorem is a special case of sampling in shift-invariant spaces, as well as, sampling in reproducing-kernel Hilbert spaces.  Moreover, it is also known that the Poisson summation formula is equivalent to the sampling theorem and the Zak transform is closely connected to the sampling theorem and the Poisson summation formula. These results have been known to hold in the Fourier transform domain for decades and were recently shown to hold in the FrFT domain \cite{Bhandari2012}. The main goal of this article is to extend these results to the SAFT domain.

 In the next section, we will introduce some of these classical results that we will extend to the SAFT domain.

\section{Preliminaries}
 Shift-invariant spaces have been the focus of many research papers in recent years because of their close connection with sampling theory \cite{unser2000,eldar2009} and wavelets and multiresolution analysis \cite{Walter1992,Janssen1993,WalterZ}. They have many applications in signal and image processing \cite{Bhandari2012}. For example, in
many signal processing applications, it is of interest to represent
a signal as a linear combination of shifted versions of some basis
function $\varphi ,$ called the generators of the space, that generates a stable basis for a space.
More precisely, we consider spaces of the form
\begin{equation}
\mathcal{V}\left( \varphi  \right) = \left\{ {f\left( t \right) = \sum\limits_{n =  - \infty }^{ + \infty } {c\left[ n \right]\varphi \left( {t - n} \right)} ,\varphi  \in L^2 \left( \mathbb{R} \right),\left\{ {c\left[ n \right]} \right\} \in \ell _2 } \right\}.
\label{sis}
\end{equation}
The closure of $\mathcal{V}(\varphi )$ is a closed subspace of $L^2$. Furthermore,
 it is shift-invariant in the sense that for all $ f \in \mathcal{V}(\varphi )$,
 its shifted version, $f(\cdot - k)\in \mathcal{V}(\varphi )$, $k \in \mathbb{Z},$ where $\Z$ denotes
 the set of integers.

For the basis functions to be stable, it is required that the
family of functions $ \left\{ {\varphi\left( {t - n} \right)}
\right\}_{n =  - \infty }^\infty$ forms a Riesz basis or
equivalently, there exists two positive constants $0< \eta _1,\eta_2 < +\infty ,$
  such that
\begin{equation}
    \forall c \in \ell _2 , \quad \eta _1 \left\| c \right\|_{\ell _2 }^2
\leqslant \left\| {\sum\limits_{n =  - \infty }^\infty
{c[k]\varphi \left( {t - k} \right)} } \right\|_{L^2 }^2
\leqslant \eta _2 \left\| c \right\|_{\ell _2 }^2
    \label{riesz}
\end{equation}
where $\ell_2$ is the space of all square-summable sequences and  $\left\| c \right\|_{\ell _2 }^2$ is the squared $\ell _2$-norm of
the sequence. We define the Fourier transform of $h(t)$ by $\widehat h\left( \omega  \right) = \frac{1}{\sqrt{2\pi}}\int_{ - \infty }^{ + \infty }
{h(t)e^{-j\omega t} dt}$. Recall, the Fourier domain equivalent of (\ref{riesz}) is
\begin{equation}
\eta _1  \leqslant \sum\nolimits_{n =  - \infty }^{ + \infty } {\left|
{\widehat\varphi \left( {\omega  + 2\pi n} \right)} \right|^2 }  \leqslant \eta _2.
    \label{rieszft}
\end{equation}
The ratio $\rho  = {{\eta _2 }}/{{\eta _1 }}$ is called the
\textit{condition number} of the Riesz basis. The basis is
shift-orthonormal \footnote{ Shift-orthonormality means that $
\left\langle {\varphi ,\varphi \left( { \cdot  - k} \right)}
\right\rangle  = \delta _k$ where $\left\langle {x,y}
\right\rangle  = \int_{ - \infty }^{ + \infty } {x(t)y^ *  (t)dt}$
is the $L^2$-inner product and $\delta_{k} = \left\{
\begin{matrix}  1, & \mbox{if } k=0  \\  0, & \mbox{if } k \ne 0
\end{matrix}\right.$ denotes the Kronecker delta. } or a tight
frame if $\rho = 1$.

One of the important tools  used in the study of sampling spaces is the Zak transform \cite{Daubechies:92, Janssen1993}.
 The Zak transform, which was introduced in quantum mechanics by J. Zak \cite{Zak1967} to solve Schr\"{o}dinger's equation for an electron subject to a periodic potential in a constant magnetic field, may be defined as
$$Z_f(t,\omega)=\sum\nolimits_{k=-\infty}^{+\infty}  f(t+k)e^{-2\pi jk \omega},  f\in L^1(\mathbb{R})$$
  It is easy to see that
 $$Z_f(t,\omega +1)=Z_f(t,\omega) \mbox{ and }  Z_f(t+1,\omega)=e^{2\pi j\omega}Z_f(t,\omega)$$
  and that the Zak transform is a unitary transformation from $L^2(\mathbb{R})$ onto $L^2(Q),$ with $\left\|Z_f\right\|_{L^2(Q)}=\left\|f\right\|_{L^2(\mathbb{R})},$  where
  $Q$ is the unit square $Q=[0,1]\times [0,1] .$

  To see the connection between the Zak transform and sampling spaces, let us for the sake of convenience define the Fourier transform
  of $f$ as
  $$\hat{f}(w)=\int_{-\infty}^\infty f(t)e^{-2\pi j wt}dt$$
  so that the inverse transform, whenever it exists, is given by
  $$f(t) =\int_{-\infty}^\infty \hat{f}(w) e^{2\pi j wt}dw .$$
  Now if $f$ belongs to a sampling space with sampling function $\psi \in L^2\cap L^1,$
  it is easy to see
  that since $f\left( t \right) = \sum\nolimits_{k =  - \infty }^{ + \infty } {f\left( k \right)\psi \left( {t - k} \right)}, $ then
  $$\hat{f}(w)=\hat{F}(w)\hat{\psi}(w),\ \ \  \mbox{ where } \ \ \ \hat{F}(w)=\sum\nolimits_{k =  - \infty }^{+\infty}f(k)e^{-2\pi \jmath kw}.$$
  Clearly $\hat{F}(w)$ is periodic with period one; hence,
  $$\left| \hat{f}(w+k)\right|=\left| \hat{F}(w)\right|\left| \hat{\psi}(w+k)\right|,$$
  which, in view of the fact that $\hat{F}(w)=Z_f(0, w),$ implies that
  $$G_f(w)=\left|Z_f(0, w)\right|^2 G_{\psi}(w),$$
  where $G_g$ denotes the Grammian of $g\in L^2(\mathbb{R})$,
defined by $G_g(w)=\sum_{k\in
\mathbb{Z}}\left|\hat{g}(w+k)\right|^2. $
  Therefore, for such a function $f$, we have
  \begin{equation}
  A \left|Z_f(0, w)\right|^2\leq G_f(w)\leq B \left|Z_f(0, w)\right|^2,\label{Z1}
  \end{equation}
  for some $A,B>0 .$ From this, we obtain
  \begin{equation}
A\sum\nolimits_{k =  - \infty }^{+\infty} \left\vert f({k})\right\vert ^{2}\leq
\int_{0}^{1}G_{f}(w)dw\leq B \sum\nolimits_{k =  - \infty }^{+\infty} \left\vert
{f}({k})\right\vert ^{2}, \label{Z2}
\end{equation}
and \begin{equation}
 \int_0^1\frac{ \sum_{k\in \mathbb{Z}}\left|\hat{f}(w+k)\right| }{\left|Z_f(0,w)\right|}dw=\int_{\mathbb{R}} \left|\hat{\psi}
 (w)\right| dw <\infty , \label{Z3}
 \end{equation}
 whenever $\hat{\psi} \in L^1(\R).$  The Shannon sampling theorem is also known to follow from the Poisson summation formula
  \begin{equation}\sum\nolimits_{k =  - \infty }^{+\infty}f(t+ k)=\sum\nolimits_{k =  - \infty }^{+\infty}\hat{f}(k)e^{2\pi jkt},\label{Pois1}
  \end{equation}
  or equivalently
\begin{equation}\sum\nolimits_{k =  - \infty }^{+\infty}\hat{f}(w+ k)=\sum\nolimits_{k =  - \infty }^{+\infty}{f}(k)e^{-2\pi jkw},\label{Pois2}
  \end{equation}
  which, when $f$ is band-limited to $(-1/2,1/2),$ leads to
  \begin{equation}\sum\nolimits_{k =  - \infty }^{+\infty}{f}(k)=\int_{\mathbb{R}}{f}(t)dt . \label{Pois3}
  \end{equation}
Moreover, we have in view of (\ref{Pois2}), that
$Z_f(0,w)=\sum\nolimits_{k =  - \infty }^{ + \infty } { \hat{f}(w+ k)}.$

\section{Convolution Structures}
In this section we introduce convolution operations associated with the SAFT, one for functions and one for a sequence of numbers and a function. Furthermore, we introduce a definition for the discrete SAFT that will be useful for deriving Poisson summation formula and Zak transform for SAFT.

In order define convolution operators associated with the SAFT
, let us first define unitary, modulation operation.
 Let $\Lu{t} = \exp \left( {\jmath \frac{{a{t^2}}}{{2b}}} \right)$ be the chirp--modulation function and let us define,
	\begin{align*}
	\up{f}{t} & \DE \Lu{t} f\left(t\right) = {e^{\jmath \frac{{a{t^2}}}{{2b}}}}f\left( t \right) \\
	\dn{f}{t} & \DE \Ld{t} f\left(t\right) = {e^{-\jmath \frac{{a{t^2}}}{{2b}}}}f\left( t \right)
	\end{align*}
 where $\overline{z}$ means the conjugate of $z.$
\begin{definition}[SAFT--Convolution]
We define the convolution $\SC$ of two functions $f$ and $g$ as
\begin{equation}
\label{saftconv}
\left( {f \SC g} \right)\left( t \right) = \frac{\Ld{t}}{{\sqrt {\left| b \right|} }}\left( {\up{f}{t} * \up{g}{t}} \right)\left( t \right)
\end{equation}
where $*$ stands for the standard convolution, that is,
$$ \left(f* g\right)(t)=\frac{1}{\sqrt{2\pi}}\int_{\R} f\left(t-x\right)g(x)dx.$$
\end{definition}
We will show that the convolution operator in (\ref{saftconv}) leads to the well-known duality property of the Fourier transform.  A similar definition was presented in \cite{Xiang2012} but our proof relies on \cite{BZ:2015, Bhandari2012,Zayed1996}. 
\begin{theorem}
Let $h(t)=(f \SC g)(t).$ Then the SAFT $H(\omega)$ of $h$ is given by
$$ H(\omega)= \overline{\eta}_\mathbf{A}(\omega) F(\omega) G(\omega),$$
where
${\eta}_\mathbf{A} \left( \omega  \right) = \exp \left( {\frac{\jmath }{{2b}}\left( {d{\omega ^2} + \Omega \omega } \right)} \right)$ and $\Omega =2(bq-dp).$
\end{theorem}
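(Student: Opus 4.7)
The plan is to substitute the definition of SAFT convolution directly into the SAFT integral \eqref{SAFTD} and reduce the result to a product of two SAFT integrals by a change of variables; the residual $\omega$-dependent quadratic phase is what produces the announced factor $\overline\eta_\mathbf{A}(\omega)$.

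First, unfolding \eqref{saftconv} together with $\up{f}{t} = \Lu{t}f(t)$ and the $1/\sqrt{2\pi}$ normalization of the standard convolution gives
$$h(t) \;=\; \frac{1}{\sqrt{2\pi|b|}} \int_{\R} \Ld{t}\,\Lu{t-x}\,\Lu{x}\, f(t-x)\, g(x)\, dx .$$
The three chirps collapse, since $\Ld{t}\Lu{t-x}\Lu{x}=\exp\!\bigl((ja/(2b))(-t^2+(t-x)^2+x^2)\bigr) = \exp(-jax(t-x)/b)$. Substituting into \eqref{SAFTD} and abbreviating the SAFT phase as $\phi(t,\omega)=(j/(2b))(at^2+d\omega^2-2t\omega+2pt+\Omega\omega)$, we arrive at
$$H(\omega) \;=\; \frac{1}{2\pi|b|}\int_{\R}\!\!\int_{\R} e^{\phi(t,\omega)}\,e^{-jax(t-x)/b}\, f(t-x)\, g(x)\, dx\, dt .$$

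Next I would change variables $u=t-x$ (Fubini is trivially justified for $f,g\in L^1(\R)\cap L^2(\R)$) and verify the key algebraic identity
$$\phi(u+x,\omega) \,-\, \frac{j a u x}{b} \;=\; \phi(u,\omega) \,+\, \phi(x,\omega) \,-\, \frac{j}{2b}\bigl(d\omega^2+\Omega\omega\bigr) .$$
Indeed, expanding $a(u+x)^2=au^2+2aux+ax^2$ makes the $+jaux/b$ contribution on the left cancel against the change-of-variable chirp; the linear pieces $-2(u+x)\omega+2p(u+x)$ split cleanly into $u$- and $x$-parts; and only one copy of $d\omega^2+\Omega\omega$ appears on the left, producing exactly the deficit seen on the right. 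Since $|\eta_\mathbf{A}(\omega)|=1$, exponentiating this deficit yields the factor $1/\eta_\mathbf{A}(\omega)=\overline\eta_\mathbf{A}(\omega)$.

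With this identity in hand, the double integral factors into a product of two independent SAFT integrals, one in $u$ against $f$ and one in $x$ against $g$, giving
$$H(\omega) \;=\; \overline\eta_\mathbf{A}(\omega)\, F(\omega)\, G(\omega),$$
as claimed. The main (and essentially only) obstacle is the quadratic-phase bookkeeping in the displayed identity; no analytic subtlety is required beyond Fubini, and the factorization is immediate once the phase identity is established.
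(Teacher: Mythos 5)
Your proposal is correct and follows essentially the same route as the paper: substitute the convolution definition into the SAFT integral, let the chirps collapse to $e^{-jax(t-x)/b}$, and split the kernel phase so that the double integral factors into $F(\omega)G(\omega)$ times the residual phase $\overline{\eta}_\mathbf{A}(\omega)$. Your explicit statement and verification of the phase identity is in fact a cleaner account of the step the paper performs implicitly when it passes from the double integral to the product of two SAFT integrals.
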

\begin{proof}
{\small
\begin{eqnarray*}
H(\omega)&=& \int_{\R} h(t) k(t,\omega) dt \\
%&=& \frac{1}{\sqrt{2\pi |b|}} \int_{\R} \exp\left\{ \frac{j}{2b}\left( at^2+ d\omega^2 -2t\omega + \Omega \omega+2pt \right)\right\} \left( \frac{\Ld{t}}{\sqrt{2\pi|b|}}\int_{\R} \up{f}{t-x} \up{g}{x}dx \right) dt \\
&=& \int_{\R} \underbrace{\left( \frac{\Ld{t}} {\sqrt{2\pi|b|}}\int_{\R} \up{f}{t-x} \up{g}{x} dx \right)}_{h\left( t \right)} k\left(t,\omega \right) dt \\
&=& \int_{\R} k(t,\omega)\frac{\Ld{t}}{\sqrt{2\pi|b|}}\int_{\R}  {f}(t-x)\exp\left( \frac{\jmath}{2b}\left( a(t-x)^2\right)\right)
   {g}(x)\exp\left( \frac{\jmath}{2b}\left( ax^2\right)\right) dx dt\\
   &=& \Nd{\omega} \frac{1}{{2\pi |b|}}\int_{\R}f(t-x)\exp\left( \frac{\j}{2b}\left( a(t-x)^2+d\omega^2-2\omega(t-x) +\Omega \omega+2p(t-x) \right)\right) dt\\
   &\times & \int_{\R}g(x)\exp\left( \frac{\j}{2b}\left( ax^2+d\omega^2-2\omega x +\Omega \omega+2px \right)\right ) dx\\
  &=& \Nd{\omega} \frac{1}{{2\pi |b|}}\int_{\R}f(u)\exp\left( \frac{\j}{2b}\left( au^2+d\omega^2-2\omega u +\Omega \omega+2pu \right)\right)du\\
   &\times & \int_{\R}g(x)\exp\left( \frac{\j}{2b}\left( ax^2+d\omega^2-2\omega x +\Omega \omega+2px \right)\right)dx\\
    &=& \Nd{\omega} F(\omega)G(\omega).
\end{eqnarray*}}
\end{proof}

\noindent The following definition of the convolution of a sequence and a function is a generalization of that given in \cite{Bhandari2012}.
\begin{definition}[Discrete Time SAFT (DT--SAFT)]
\label{DiscSFAT}
Let $P=\left\{ p(k)\right\}$ be a sequence in $\ell^2,$ that is, $\sum_k \left|p(k)\right|^2<\infty.$  We define the discrete time SAFT of $P$ as
\begin{equation}
\dtsaft{P}=\frac{1}{\sqrt{2\pi |b|}} \sum_k p(k)\exp\left\{ \frac{j}{2b}\left( ak^2+d\omega^2-2\omega k +\Omega \omega+2pk \right)\right\},
 \end{equation}
and define the convolution of a sequence $P$ and a function $\phi \in L^2(\R)$ as
$$ h(t)=(P \SC \phi)(t)= \frac{1}{\sqrt{2\pi |b|}} e^{-jat^2/2b} \sum_k e^{jak^2/2b} p(k)e^{ja(t-k)^2/b} {\phi}(t-k) . $$
\end{definition}
\begin{lemma}
Let $P$ and $\phi$ be as above and $h(t)=(P*_{\bsls} \phi)(t)$. Then
$$H(\omega)=\dtsaft{h}=\overline{\eta}_{\bsls}(\omega) \dtsaft{P} \Phi_{\bsls}(\omega),$$
where $\Phi_{\bsls}$ is the SAFT of $\phi.$
Moreover, $\left| \dtsaft{P} \right|$ is periodic with period $\Delta=2\pi b.$ \label{lemma1}
\end{lemma}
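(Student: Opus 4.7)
The plan is to compute $H(\omega) = \widehat{h}_{\mathbf{A}}(\omega)$ directly from the SAFT definition (\ref{SAFTD}), and then recognize the result as a product involving $\widehat{P}_{\mathbf{A}}$ and $\Phi_{\mathbf{A}}$. The key observation is that the chirp $\overline{\lambda}_{\mathbf{A}}(t) = e^{-jat^2/2b}$ sitting outside the sum in the definition of $P \SC \phi$ was inserted precisely to cancel the $at^2$ part of the SAFT kernel, so that after a single translation the integration over $t$ and the summation over $k$ decouple cleanly.

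First, I substitute the formula for $h(t)$ into the SAFT integral. The factor $e^{-jat^2/2b}$ from $h(t)$ cancels the $e^{jat^2/2b}$ term inside $k(t,\omega)$, so (using Fubini, which is justified since $\{p(k)\}\in \ell^2$ and $\phi\in L^2$) what remains is
$$H(\omega) = \frac{1}{2\pi|b|}\sum_k p(k)\,e^{jak^2/2b}\int_{\mathbb{R}} \phi(t-k)\,e^{ja(t-k)^2/2b}\,e^{\frac{j}{2b}(d\omega^2 - 2t\omega + 2pt + \Omega\omega)}\,dt.$$
A change of variables $u = t-k$ causes all $u$-dependent terms to reassemble into the SAFT kernel $k(u,\omega)$ applied to $\phi(u)$, while the $k$-dependent pieces from the exponent collect into the scalar $e^{\frac{j}{2b}(-2k\omega + 2pk)}$. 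Pulling the integral (which equals $\sqrt{2\pi|b|}\,\Phi_{\mathbf{A}}(\omega)$) outside gives
$$H(\omega) = \Phi_{\mathbf{A}}(\omega)\cdot \frac{1}{\sqrt{2\pi|b|}}\sum_k p(k)\,e^{\frac{j}{2b}(ak^2 - 2k\omega + 2pk)}.$$

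Next, to match the DT-SAFT of Definition \ref{DiscSFAT}, I multiply and divide the residual sum by $\eta_{\mathbf{A}}(\omega) = e^{\frac{j}{2b}(d\omega^2+\Omega\omega)}$. The inserted $\eta_{\mathbf{A}}(\omega)$ combines with the sum to restore exactly $\widehat{P}_{\mathbf{A}}(\omega)$, while the canceling factor $\overline{\eta}_{\mathbf{A}}(\omega)$ stays outside, giving $H(\omega) = \overline{\eta}_{\mathbf{A}}(\omega)\,\widehat{P}_{\mathbf{A}}(\omega)\,\Phi_{\mathbf{A}}(\omega)$ as required.

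For the periodicity claim, since $|\eta_{\mathbf{A}}(\omega)|=1$ I can strip the quadratic-in-$\omega$ chirp to obtain
$$|\widehat{P}_{\mathbf{A}}(\omega)| = \frac{1}{\sqrt{2\pi|b|}}\left|\,\sum_k \left(p(k)\,e^{\frac{j}{2b}(ak^2+2pk)}\right)\,e^{-j\omega k/b}\right|,$$
which is the modulus of an ordinary DTFT, evaluated at $\omega/b$, of the $\ell^2$ sequence $\{p(k)\,e^{\frac{j}{2b}(ak^2+2pk)}\}$. Any such DTFT is $2\pi$-periodic in its argument, hence $|\widehat{P}_{\mathbf{A}}|$ is periodic in $\omega$ with period $\Delta = 2\pi b$. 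The main obstacle throughout is simply phase bookkeeping: tracking how the $d\omega^2$ and $\Omega\omega$ terms move in and out of the summation when identifying the DT-SAFT, so that the product on the right-hand side factors cleanly as claimed.
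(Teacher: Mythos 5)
Your proposal is correct and follows essentially the same route as the paper's proof: substitute $h$ into the SAFT integral, cancel the $e^{\pm jat^2/2b}$ chirps, shift $u=t-k$ so the integral becomes $\sqrt{2\pi|b|}\,\Phi_{\mathbf{A}}(\omega)$, and reinsert $\eta_{\mathbf{A}}(\omega)\overline{\eta}_{\mathbf{A}}(\omega)$ to reconstitute $\widehat{P}_{\mathbf{A}}(\omega)$. Your periodicity argument via the unimodular chirp and the $2\pi$-periodic DTFT at $\omega/b$ is just a repackaging of the paper's direct computation that $\widehat{P}_{\mathbf{A}}(\omega+\Delta)$ differs from $\widehat{P}_{\mathbf{A}}(\omega)$ by a unit-modulus phase, both resting on $e^{-jk\Delta/b}=1$.
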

\begin{proof}
From the definition of SAFT, we have
\begin{eqnarray*}
H(\omega) &=& \int_{\R} h(t)k(t,\omega) dt= \frac{1}{{2\pi |b|}} \sum_k e^{jak^2/2b} p(k)\int_{\R} e^{-jat^2/2b}\phi (t-k)e^{ja(t-k)^2/2b}\\
&\times & \exp\left\{ \frac{j}{2b}\left( at^2+d\omega^2-2\omega t +\Omega \omega+2pt \right)\right\}dt\\
&=&\frac{1}{{2\pi |b|}} \sum_k e^{jak^2/2b} p(k)\int_{\R} \phi (u)
\exp\left\{ \frac{j}{2b}\left( au^2+d\omega^2-2\omega (u+k) +\Omega \omega+2p(u+k) \right)\right\}du\\
&=& \frac{\overline{\eta}_{\bsls}(\omega)}{{2\pi |b|}} \sum_k \exp\left\{\frac{j}{2b}\left( ak^2-2k\omega+2pk\right) \right\} p(k) \times  \int_{\R} \phi (u)
\exp\left\{ \frac{j}{2b}\left( au^2+d\omega^2-2\omega u +\Omega \omega+2p u \right)\right\}du \\
&=& \frac{\overline{\eta}_{\bsls}(\omega)}{\sqrt{2\pi |b|}}\left[ \sum_k \exp\left\{\frac{j}{2b}\left( ak^2+d\omega^2-2k\omega+\Omega \omega + 2pk\right) \right\} p(k)\right]\Phi_{\bsls}(\omega)\\
&=& \overline{\eta}_{\bsls}(\omega)\hat{P}_{\bsls}(\omega)\Phi_{\bsls}(\omega).
\end{eqnarray*}
Furthermore, since $e^{-jk\Delta/b}=e^{-2jk\pi}=1,$ we have
\begin{eqnarray*}
\hat{P}_{\bsls}(\omega +\Delta)(\omega)&=& \frac{1}{\sqrt{2\pi |b|}} \sum_k p(k)\exp\left\{ \frac{j}{2b}\left( ak^2+d(\omega+\Delta)^2-2k(\omega +\Delta)
  +\Omega (\omega + \Delta) +2pk \right)\right\}\\
  &=& \frac{1}{\sqrt{2\pi |b|}} \sum_k p(k)\exp\left\{ \frac{j}{2b}\left( ak^2+d\omega^2-2k\omega
  +\Omega \omega +2pk \right)\right\}\\
  &\times & \exp\left\{ \frac{j}{2b}\left( d\Delta^2+ 2d\omega\Delta-2k\Delta
  +\Omega \Delta \right)\right\}\\
  &=& \hat{P}_{\bsls} (\omega)\exp\left\{ \frac{j}{2b}\left( d\Delta^2+ 2d(\omega)\Delta
  +\Omega \Delta \right)\right\}
\end{eqnarray*}
Thus,
$$\left| \hat{P}_{\bsls}(\omega +\Delta)\right|= \left| \hat{P}_{\bsls}(\omega)\right|.$$
\end{proof}
In the next theorem we give a necessary and sufficient condition for a function $\phi(t)$ to be a generator for a shift-invariant space in terms of its SAFT.

\begin{theorem}
\label{Thm:Riesz}
 Let $P=\left\{  p(n)\right\}  \in\ell_{2},$ $\phi \in L^{2}(\mathbb{R})$
and
consider the chirp-modulated shift-invariant subspaces of $L^{2}(\R)$%
$$ \mathcal  V\left( \phi  \right) = {\mathrm{closure}}\Bigg\{ f \in L^2: f(t) = \left( P\star_{A}
\phi \right) (t)\Bigg\}.$$
Then $\left\{ e^{j(t-k)^2/2b}\phi(t-k)\right\}  $ is a Riesz basis for
$\mathcal{V}(\phi)$ if and only if there exist two positive constants $\eta
_{1},\eta_{2}>0$ such that%
\begin{equation}
\label{frame}
\eta_{1}\leq\sum\limits_{k =  - \infty }^{+\infty}\left| {\Phi}_{A}(w+k)\right| ^{2}\leq\eta_{2}
\end{equation}
for all $w\in\left[  0,\Delta\right] ,$ and ${\Phi}_{A}$ is SAFT of $\phi .$
\end{theorem}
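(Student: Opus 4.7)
The plan is to push condition (\ref{riesz}) through the SAFT isometry $\|f\|_{L^2}=\|F\|_{L^2}$ so that it becomes a Grammian-type pointwise bound on the frequency side, in direct parallel with the classical shift-invariant-space argument but with $\Delta=2\pi b$ from Lemma \ref{lemma1} playing the role of $2\pi$. Given $P\in\ell_2$ and $f=P\SC\phi$, Lemma \ref{lemma1} produces $F(\omega)=\overline\eta_{\bsls}(\omega)\,\widehat P_{\bsls}(\omega)\,\Phi_{\bsls}(\omega)$; since $|\overline\eta_{\bsls}(\omega)|=1$, Parseval together with the $\Delta$-periodicity of $|\widehat P_{\bsls}|$ (also from Lemma \ref{lemma1}) yields
\[
\|f\|^2=\int_{\R}|\widehat P_{\bsls}(\omega)|^2\,|\Phi_{\bsls}(\omega)|^2\,d\omega
=\int_{0}^{\Delta}|\widehat P_{\bsls}(\omega)|^2\,G_\phi(\omega)\,d\omega,
\]
where $G_\phi(\omega):=\sum_{n\in\Z}|\Phi_{\bsls}(\omega+n\Delta)|^2$. (I read the sum in (\ref{frame}) as $\sum_k|\Phi_{\bsls}(w+k\Delta)|^2$, which is the scaling consistent with the period $\Delta$ produced by Lemma \ref{lemma1}.)

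Next I would establish a Plancherel identity for the DT--SAFT, $\int_0^{\Delta}|\widehat P_{\bsls}(\omega)|^2\,d\omega = \|P\|_{\ell_2}^2$. Factoring the quadratic $\omega$-phase out of Definition \ref{DiscSFAT} gives
\[
\widehat P_{\bsls}(\omega)=\eta_{\bsls}(\omega)\,\frac{1}{\sqrt{2\pi|b|}}\sum_{k\in\Z}\tilde p(k)\,e^{-jk\omega/b},\qquad
\tilde p(k)=p(k)\,e^{j(ak^2+2pk)/2b},
\]
so that $|\tilde p(k)|=|p(k)|$. The substitution $u=\omega/b$ turns the right-hand sum into a standard Fourier series on $[0,2\pi]$, and classical Parseval gives $\sum_k|\tilde p(k)|^2=\|P\|_{\ell_2}^2$, which is the claimed identity.

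Combining the two displays, the implication ``$\eta_1\leq G_\phi(\omega)\leq \eta_2$ on $[0,\Delta]$'' yields $\eta_1\|P\|_{\ell_2}^2\leq\|f\|^2\leq\eta_2\|P\|_{\ell_2}^2$, which is exactly (\ref{riesz}) for the system $\{\Lu{t-k}\,\phi(t-k)\}_{k\in\Z}$ (up to the inessential unitary chirp $\Ld{t}$ inherent in $\SC$). For the converse I would argue by contradiction: if the upper Grammian bound fails on a measurable set $E\subset[0,\Delta]$ of positive measure, the bijection $p\leftrightarrow\tilde p$ and density of trigonometric polynomials in $L^2[0,2\pi]$ let me pick $P\in\ell_2$ with $|\widehat P_{\bsls}|^2$ approximating a normalized indicator of $E$, forcing $\|f\|^2>\eta_2\|P\|_{\ell_2}^2$; the lower bound is symmetric. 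The only genuinely SAFT-specific piece is the Plancherel identity above---once the chirp phases are absorbed into $\tilde p$, the remainder is the standard shift-invariant-space proof rescaled to period $\Delta$, so I expect that Plancherel bookkeeping to be the single real obstacle.
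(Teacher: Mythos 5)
Your proposal takes essentially the same route as the paper's proof: both reduce the Riesz bounds to the identity $\|f\|^2=\int_0^\Delta|\widehat{P}_{\bsls}(\omega)|^2\,G_{\phi,\bsls}(\omega)\,d\omega$ via Lemma \ref{lemma1} and the $\Delta$-periodicity of $|\widehat{P}_{\bsls}|$, and both verify the DT--SAFT Plancherel identity $\int_0^\Delta|\widehat{P}_{\bsls}(\omega)|^2\,d\omega=\|P\|_{\ell_2}^2$ by the same orthogonality computation (your substitution $u=\omega/b$ is exactly the paper's step $\int_0^\Delta e^{j\omega(l-k)/b}\,d\omega=2\pi b\,\delta_{k,l}$). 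Your reading of the Grammian as $\sum_k|\Phi_{\bsls}(w+k\Delta)|^2$ is indeed the scaling the paper's proof actually uses, and your sketched converse is the standard completion of an argument the paper leaves implicit.
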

\begin{proof}
Since $f(t)=\left( P\star_{A}\phi \right)(t),$ we have by the previous lemma, $${F}_{A} (w)=\overline{\lambda}_{A}(w){\hat{P}}_{A} (w){\Phi}_{A} (w);$$ and hence $$\left|{F}_{A} (w)\right|^2=
\left|\hat{P}_{A} (w) \right|^2 \left| {\Phi}_{A} (w)\right|^2,$$ where $\hat{P}_{A} (w)$ is given by Definition (\ref{DiscSFAT}).
Thus, because $\left|\hat{P}_{\bsls} (\omega)\right|$ is periodic with period $\Delta$
\begin{align*}
  \left\|  F_{\bsls}  \left( \omega  \right) \right\|_{L^2 \left( R \right)}^2 & =
   \int\limits_{ - \infty }^{  \infty } \left| \hat{P}_{\bsls}  \left( \omega  \right)\right|^2 \left|\Phi _{\bsls}  \left( \omega  \right) \right|^2 d\omega   \hfill \\
  & = \sum\limits_{k =  - \infty }^{  \infty } \int\limits_{k\Delta }^{(k + 1)\Delta } \left| \hat{P}_{\bsls}  \left( \omega  \right)\right|^2 \left|\Phi _{\bsls} \left( \omega  \right) \right|^2 d\omega   \hfill \\
  & = \sum\limits_{k=  - \infty }^{  \infty } \int_0^\Delta  \left| \hat{P}_{\bsls} \left( \omega  + k\Delta  \right) \right|^2 \left| \Phi _{\bsls} \left( \omega  + k\Delta  \right) \right|^2 d\omega  .  \hfill \\
  & =  \int_0^\Delta  \left|  \hat{P}_{\bsls} \left( \omega   \right) \right|^2 \sum\limits_{k=  - \infty }^{  \infty }\left| \Phi _{\bsls} \left( \omega  + k\Delta  \right) \right|^2 d\omega    \hfill \\
  &= \int_0^\Delta \left|  \hat{P}_{\bsls} \left( \omega   \right) \right|^2 G_{\phi,A}(\omega) d \omega ,
\end{align*}
where $G_{\phi,A}(\omega)=\sum_k \left| \Phi_{\bsls} (\omega + k \Delta)\right|^2 $ is the Grammian of $\phi.$
But
$$
\int_0^\Delta \left|\hat{P}_{\bsls}(\omega)\right|^2 d\omega = \frac{1}{2\pi b}\int_0^\Delta \sum_{k,l}p(k)\overline{p}(l)\exp\left\{\frac{j}{2b}\left[
a(k^2-l^2)-2\omega (k-l)+2m(k-l)\right]\right\} d\omega
$$
and since
$$\int_0^\Delta e^{{j}(-2\omega)(k-l)/2b}d\omega=\int_0^\Delta e^{{j}(\omega)(l-k/b)}d\omega =b\int_0^{2\pi} e^{ju(l-k)}du=2\pi b\delta_{k,l} ,$$
it follows that
$$ \left\|\hat{P}_{\bsls}\right\|_{L^2[0,\Delta]}^2=\int_0^\Delta\left|\hat{P}_{\bsls}(\omega)\right|^2d\omega=\sum_k \left|p(k)\right|^2=\left\|p(k)\right\|_{\ell^2}^2.   $$
Since $$0< \eta_1\leq G_{\phi,A}(\omega)\leq \eta_2<\infty $$ and $$ \left\|p(k)\right\|_{\ell^2}^2=\left\|e^{{j}(ak^2)/{2b}} p(k)\right\|_{\ell^2}^2 $$
we have
$$\eta_1 \left\|\hat{P}_{\bsls}\right\|^2=\eta_1 \left\|p(k)\right\|^2\leq \left\|F_{\bsls}\right\|^2\leq \eta_2 \left\|p(k)\right\|^2 \leq \eta_2 \left\|\hat{P}_{\bsls}\right\|^2$$
which completes the proof.\end{proof}

To get orthonormal basis for the $\mathcal  V\left( \phi  \right)$ we use the standard trick of putting
$$H_{\bsls}(\omega)=\frac{\Phi_{\bsls} (\omega)}{\sqrt{ G_{\phi, A}(\omega)}}$$
so that
$$ \sum_k \left|H_{\bsls}(\omega +k\Delta)\right|^2=\frac{1}{G_{\phi, A}(\omega)}\sum_k \left|\Phi_{\bsls}(\omega +k\Delta)\right|^2=1.$$

\section{The Zak Transform Associated with the SAFT}
\begin{definition} \label{defzak}
We define the Zak transform  associated with the SAFT of a signal $f$ as
$$ Z_{\bsls}(t,\omega)=\frac{1}{\sqrt{2\pi b}}\sum_k f(t+k)\exp\left\{ \frac{j}{2b}\left[ d\omega^2 + ak^2 -2k\omega +\Omega \omega +2pk\right]\right\}$$
\end{definition}
We have the following theorem
\begin{theorem} The Zak transform given by definition \ref{defzak} is an isometry between $L^2(\R)$ and $L^2(B),$ where $\mbox{B}=[0,1]\times[0, \Delta],$ that is there is a one-to-one correspondence between $f\in L^2(\R)$ and $Z_A \in L^2(B)$ such that $\left\|f\right\|_{L^2(\R)}^2 = \left\|Z_{\bsls}\right\|_{L^2(\mbox{B})}^2.$
\end{theorem}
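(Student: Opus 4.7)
The plan is to establish the isometry by a direct computation of $\|Z_{\bsls}\|_{L^2(B)}^2$ and then to argue bijectivity by exhibiting an explicit inverse. The key structural observation is that the defining sum may be rewritten as
\begin{equation*}
Z_{\bsls}(t,\omega) = \frac{e^{j(d\omega^2+\Omega\omega)/(2b)}}{\sqrt{2\pi b}} \sum_{k} \bigl[ f(t+k)\, e^{j(ak^2+2pk)/(2b)} \bigr] e^{-jk\omega/b},
\end{equation*}
so that, up to a unimodular chirp in $\omega$, $Z_{\bsls}$ is essentially a Fourier series in the variable $\omega/b$ whose coefficients are the chirp-modulated samples of $f$ at the integer translates of $t$.

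First I would compute $|Z_{\bsls}(t,\omega)|^2$. The outer chirp $e^{j(d\omega^2+\Omega\omega)/(2b)}$ has modulus one and drops out immediately. Integrating in $\omega$ over $[0,\Delta]$ with $\Delta = 2\pi b$, the orthogonality relation
\begin{equation*}
\int_0^{2\pi b} e^{-j\omega(k-l)/b}\, d\omega = 2\pi b\, \delta_{k,l}
\end{equation*}
eliminates every off-diagonal cross term, and on the diagonal $k=l$ the remaining chirps $e^{j(ak^2+2pk)/(2b)}$ cancel against their conjugates. The prefactor $1/(2\pi b)$ cancels the factor $2\pi b$ coming from the orthogonality integral, leaving $\sum_k |f(t+k)|^2$. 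Integrating in $t \in [0,1]$ and using the tiling $\R = \bigsqcup_{k\in\Z} [k,k+1)$ then produces $\|f\|_{L^2(\R)}^2$. This establishes the norm equality and, in particular, injectivity of $Z_{\bsls}$.

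Surjectivity is the step I expect to require the most care. Given $G \in L^2(B)$, I would peel off the unimodular chirp by setting $\tilde{G}(t,\omega) = \sqrt{2\pi b}\, e^{-j(d\omega^2+\Omega\omega)/(2b)} G(t,\omega)$ and, for almost every $t \in [0,1)$, expand $\tilde{G}(t,\cdot)$ as a Fourier series on $[0,2\pi b]$ in the complete orthogonal system $\{e^{-jk\omega/b}\}_{k\in\Z}$. Denoting the resulting Fourier coefficients by $c_t(k)$, I would define
\begin{equation*}
f(t+k) := c_t(k)\, e^{-j(ak^2+2pk)/(2b)}, \qquad t \in [0,1),\ k \in \Z,
\end{equation*}
and verify that $f \in L^2(\R)$ with $Z_{\bsls} f = G$. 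The main obstacle is bookkeeping: one must carefully track the $\sqrt{2\pi b}$ normalization through the Fourier-series inversion and confirm, via Plancherel on $[0,2\pi b]$ followed by the tiling argument, that $\|f\|_{L^2(\R)}^2 = \|G\|_{L^2(B)}^2$, thereby closing the loop with the isometry already established.
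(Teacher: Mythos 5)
Your computation of the norm identity is essentially the paper's own argument: expand $|Z_{\bsls}(t,\omega)|^2$ as a double sum over $k,l$, kill the cross terms with $\int_0^\Delta e^{-j\omega(k-l)/b}\,d\omega = 2\pi b\,\delta_{k,l}$, observe that the unimodular chirps cancel on the diagonal, and then integrate over $t\in[0,1]$ using the tiling of $\R$ by unit intervals. The one substantive difference is that you go further than the paper does: the paper's proof stops at $\|Z_{\bsls}\|_{L^2(B)}^2=\|f\|_{L^2(\R)}^2$ (after first noting the quasi-periodicity $|Z_{\bsls}(t,\omega+\Delta)|=|Z_{\bsls}(t,\omega)|$, which neither of you actually needs for the norm computation), whereas you also sketch surjectivity onto $L^2(B)$ by peeling off the chirp, expanding $\tilde G(t,\cdot)$ in the complete orthogonal system $\{e^{-jk\omega/b}\}$ on $[0,2\pi b]$, and reading off $f$ from the Fourier coefficients. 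Since the theorem as stated asserts a one-to-one \emph{correspondence} between $L^2(\R)$ and $L^2(B)$, your extra step is not padding: norm preservation alone gives injectivity but not ontoness, so your version actually proves the full claim while the paper's proof establishes only the isometric embedding. Your surjectivity sketch closes correctly provided you note the measurability of $t\mapsto c_t(k)$ and run Plancherel on $[0,2\pi b]$ followed by the same tiling argument, exactly as you indicate.
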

 \begin{proof}
 First, let us observe that since $$e^{-jk\Delta/b}=e^{-2jk\pi }=1,$$ it follows that
  \begin{align*}
  Z_{\bsls}(t,\omega +\Delta)& =  \frac{1}{\sqrt{2\pi b}}\sum_k f(t+k)\\
  &\times \exp\left\{ \frac{j}{2b}\left[ d(\omega+\Delta)^2 +ak^2 -2k(\omega+\Delta) +\Omega (\omega +\Delta) +2pk\right]\right\}\\
  &= \exp\left\{ \frac{j\Delta}{2b}\left[ d\Delta +2d \omega+ +\Omega\right]\right\}Z_{\bsls}(t,\omega)
  \end{align*}
Thus $ \left|Z_{\bsls}( t,\omega +\Delta)\right|^2=  \left|Z_{\bsls}( t,\omega )\right|^2$, and we have
 \begin{eqnarray*}
  \int_0^\Delta \left|Z_{\bsls}( t,\omega )\right|^2 d\omega & = & \frac{1}{{2\pi b}}\sum_{k,l} f(t+k)\overline{f}(t+l)\\
 &\times&  \int_0^\Delta \exp\left\{ \frac{j}{2b}\left[ a(k-l)^2 -2\omega(k-l) +2m(k-l)\right]\right\}d\omega \\
  &=& \frac{1}{{2\pi b}}\sum_{k,l} f(t+k)\overline{f}(t+l)
   \exp\left\{ \frac{j}{2b}\left[ a(k-l)^2  +2m(k-l)\right]\right\}\\
   &\times & \int_0^\Delta e^{-j\omega(k-l)/b} d\omega\\
   &=& \frac{1}{{2\pi b}}\sum_{k,l} f(t+k)\overline{f}(t+l)
   \exp\left\{ \frac{j}{2b}\left[ a(k-l)^2  +2m(k-l)\right]\right\}\\
   &\times & \int_0^{2\pi} e^{-jx(k-l)} b dx\\
   &=& \sum_{k} \left|f(t+k)\right|^2.
  \end{eqnarray*}
Therefore
\begin{align*}
\left\|Z_{\bsls}\right\|_{L^2(\mbox{B})}^2& =\int_0^1\int_0^\Delta \left|Z_{\bsls}( t,\omega )\right|^2 d\omega dt = \int_0^1  \sum_{k} \left|f(t+k)\right|^2 dt\\
& = \int_{-\infty}^\infty \left|f(t)\right|^2 dt =\left\|f\right\|_{L^2(\R)}^2.
\end{align*}
\end{proof}
Now we derive the analogue of Formulae (\ref{Z3}) . It is easy to see that
$$Z_{A,f}(0, \omega)=\hat{F}_{\bsls}\left( \omega, \left\{k\right\}\right) ,$$ where $\hat{F}_{\bsls}\left( \omega, \left\{k\right\}\right)$ is the SAFT of the sequence of samples  $\left\{f(k)\right\}$ of $f.$    Let $f$ be in the sampling space generated by $\phi(t)$ ,i.e.,
$$ f(t)=\frac{1}{{2\pi b}}e^{-jat^2/2b}\sum_{k} e^{jak^2/2b}f(k) e^{ja(t-k)^2/2b} {\phi}(t-k). $$
By Lemma \ref{lemma1}  we have
$$ \hat{F}_{\bsls} (\omega)=\overline{\eta}(\omega)\hat{F}_{\bsls}\left( \omega, \left\{k\right\}\right) \Phi_{\bsls}(\omega)=\overline{\eta}(\omega) Z_{A,f}(0, \omega)  \Phi_{\bsls}(\omega)$$
Hence,
$$ \left|\hat{F}_{\bsls} (\omega)\right|^2=\left|  Z_{A,f}(0, \omega)\right|^2 \left|\Phi_{\bsls}(\omega)\right|^2 $$ and because the modulus of the Zak transform is periodic with period $\Delta,$ we have
$$ \left|\hat{F}_{\bsls} (\omega+k\Delta)\right|^2=\left|  Z_{A,f}(0, \omega)\right|^2 \left|\Phi_{\bsls}(\omega+k\Delta)\right|^2  $$
we have
$$ G_{A, f}(\omega)= \left|  Z_{A,f}(0, \omega)\right|^2  G_{A, \phi}(\omega),$$ where $G_{A,f}$ is the Grammian of $f.$
Since
$$\int_0^\Delta {G}_{A, \phi}(\omega)d\omega= \sum\limits_{k =  - \infty }^{+\infty}\int_0^\Delta \left|\hat{\Phi}_{A}(\omega +k \Delta )\right|^{2}=\left\| \Phi_{\bsls}
\right\|^2<\infty,$$
it follows that $ G_{A, \phi}(\omega)<\infty$ almost everywhere. By relation (\ref{frame}), $ \eta_1\leq G_{A, \phi}(\omega)\leq \eta_2,$ and from this we obtain
$$ \eta_1 \leq \frac{G_{A, f}(\omega)}{\left|  Z_{A,f}(0, \omega)\right|^2}\leq \eta_2. $$ Thus,
$$\eta_1 \left|  Z_{A,f}(0, \omega)\right|^2\leq G_{A, f}(\omega) \leq \eta_2  \left|  Z_{A,f}(0, \omega)\right|^2 . $$ But it is easy to see that
\begin{eqnarray*}
\left|  Z_{A,f}(0, \omega)\right|^2 &=& Z_{A,f}(0, \omega)\overline{Z}_{A,f}(0, \omega)\\
&=&\sum_{k,l} f(k)\overline{f}(l)\exp\left\{\frac{j}{2b}\left[a(k^2-l^2)-2\omega (k-l)+2p(k-l)  \right]   \right\}.
\end{eqnarray*}
Therefore, since $\int_0^\Delta \exp\left( -j\omega(k-l)/b\right)d\omega =2\pi b \delta_{k,l}, $ we have
$$\int_0^\Delta  \left|  Z_{A,f}(0, \omega)\right|^2 d\omega= \sum_k \left|f(k)\right|^2, $$ and it follows that
$$ \eta_1 \sum_k \left|f(k)\right|^2 \leq \int_0^\Delta G_{A,f}(\omega) d\omega \leq \eta_2 \sum_k \left|f(k)\right|^2 . $$
Moreover, $${\small
 \int_0^\Delta \frac{\sum_k \left|F_{\bsls}(\omega + k\Delta) \right|}{\left|  Z_{A,f}(0, \omega)\right|}d\omega =\int_0^\Delta \sum_k \left|\Phi_{\bsls}(\omega + k\Delta) \right| d\omega =\int_{\R} \left|F_{\bsls}(\omega)\right|d\omega <\infty}.$$

\section{Poisson Summation Formula for SAFT}
In the next theorem we derive the Poisson summation formula for the SAFT.
\begin{theorem}
 The Poisson summation formula for the SAFT is given by
\begin{align*}
& {\sqrt{2\pi b}}\sum\limits_{k=-\infty}^{+\infty}f(t+k\Delta)\exp\left\{\frac{j}{2b}\left[ a(t+k\Delta)^2+2p(t+k\Delta)\right]\right\} \\
&= \sum\limits_{n =  - \infty }^{ + \infty }\exp\left\{\frac{-j}{2b}\left[ dn^2 +\Omega n-2nt\right]\right\}F_{\bsls}(n).
\label{PSFSAFT}
\end{align*}
\end{theorem}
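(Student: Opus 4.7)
The plan is to reduce the SAFT Poisson formula to the classical Poisson summation formula via a chirp--modulation trick, exactly in the spirit of how the SAFT itself factors through an ordinary Fourier transform.

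First, I would introduce the auxiliary function
\begin{equation*}
g(t) \DE f(t)\exp\!\left\{\frac{j}{2b}\bigl(at^{2}+2pt\bigr)\right\},
\end{equation*}
which absorbs the $t$--dependent chirp part of the SAFT kernel $k(t,\omega)$. Comparing with the definition of $F_{\bsls}(\omega)$ in \eqref{SAFTD}, a direct inspection of the exponent shows
\begin{equation*}
F_{\bsls}(\omega)=\frac{\eta_{\bsls}(\omega)}{\sqrt{2\pi|b|}}\int_{\R} g(t)\,e^{-j\omega t/b}\,dt,
\end{equation*}
so that the ordinary Fourier transform $\tilde g(\xi)=\int g(t)e^{-j\xi t}dt$ satisfies $\tilde g(\xi)=\sqrt{2\pi|b|}\,\overline{\eta}_{\bsls}(b\xi)\,F_{\bsls}(b\xi)$. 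This is the key identification: the SAFT of $f$ sampled on $\Z$ corresponds, up to chirp modulation, to the Fourier transform of $g$ sampled on the dual lattice $\tfrac{1}{b}\Z$.

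Next I would apply the classical Poisson summation formula to $g$, using the period $\Delta=2\pi b$ dictated by the periodicity of $|\hat P_{\bsls}|$ established in Lemma \ref{lemma1}. With the convention $\tilde g(\xi)=\int g(t)e^{-j\xi t}dt$, the standard formula reads
\begin{equation*}
\sum_{k=-\infty}^{+\infty} g(t+k\Delta)=\frac{1}{\Delta}\sum_{n=-\infty}^{+\infty}\tilde g\!\left(\frac{2\pi n}{\Delta}\right)\exp\!\left(\frac{j 2\pi n t}{\Delta}\right)=\frac{1}{2\pi b}\sum_{n=-\infty}^{+\infty}\tilde g\!\left(\frac{n}{b}\right)e^{jnt/b}.
\end{equation*}
Substituting the expression for $\tilde g(n/b)=\sqrt{2\pi|b|}\,\overline{\eta}_{\bsls}(n)F_{\bsls}(n)$ and multiplying both sides by $\sqrt{2\pi b}$ gives
\begin{equation*}
\sqrt{2\pi b}\sum_{k} g(t+k\Delta)=\sum_{n}\overline{\eta}_{\bsls}(n)\,F_{\bsls}(n)\,e^{jnt/b},
\end{equation*}
where $\overline{\eta}_{\bsls}(n)\,e^{jnt/b}=\exp\!\left\{\tfrac{-j}{2b}(dn^{2}+\Omega n-2nt)\right\}$, matching the right--hand side of the statement.

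Finally I would unfold $g(t+k\Delta)$ on the left, which by definition produces the chirp $\exp\{j(a(t+k\Delta)^{2}+2p(t+k\Delta))/(2b)\}$ multiplying $f(t+k\Delta)$, yielding precisely the left--hand side of the theorem. I expect the only delicate step to be step two: the classical Poisson formula requires mild decay/smoothness on $g$ (e.g.\ $g$ and $\tilde g$ in a Wiener--type class) so that both series converge and the interchange is legitimate; this can be stated as a standing hypothesis on $f$ (the same hypothesis used implicitly when writing $\sum_n F_{\bsls}(n)$ absolutely). Once that is granted, the rest is bookkeeping of the chirp factors, and the whole argument is really just one line of classical Poisson wrapped between two chirp multiplications.
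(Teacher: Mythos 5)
Your proposal is correct and follows essentially the same route as the paper: both rest on chirp--modulating $f$ by $\exp\{j(at^{2}+2pt)/2b\}$, periodizing with period $\Delta=2\pi b$, and expanding in the orthogonal family $\{e^{jnt/b}\}$. The only difference is presentational --- the paper computes the Fourier coefficients of the chirped periodization explicitly (in effect re-proving the classical Poisson formula inline by unfolding $\sum_{k}\int_{k\Delta}^{(k+1)\Delta}$ into $\int_{\R}$), whereas you factor the SAFT through the ordinary Fourier transform of $g$ and invoke the classical formula as a black box; your version is more modular and has the merit of making explicit the decay hypotheses on $f$ that the paper leaves implicit.
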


\begin{proof}
Let $f\left(t\right)$ be an integrable function and define
$$
\tilde{f}\left(t\right)=\frac{1}{\sqrt{2\pi b}}\sum\limits_{k=-\infty}^{+\infty}f(t+k\Delta)\exp\left\{\frac{j}{2b}\left[ a(t+k\Delta)^2+2p(t+k\Delta)\right]\right\}.
$$
It is easy to see that $\tilde{f}\left(t\right)$ is a $\Delta$--periodic function. Therefore,
 it can be expanded as a Fourier series in terms of the orthogonal family $\left\{ e^{j n t/b}\right\}.$ Hence,
 {\small
\begin{eqnarray*}
\tilde{f}\left(t\right)& =& \frac{1}{2\pi b}\sum\limits_{n =  - \infty }^{ + \infty }e^{j n t/b}\int_0^\Delta e^{-j n  y/b}\tilde{f}(y)dy\\
&=& \frac{1}{\sqrt{2\pi b}}\sum\limits_{n =  - \infty }^{ + \infty }e^{j n  t/b}\sum\limits_{k =  - \infty }^{  \infty }\int_0^\Delta e^{- j n  y/b}f(y+k\Delta) \times \exp\left\{\frac{j}{2b}\left[ a(y+k\Delta)^2+2p(y+k\Delta)\right]\right\}dy\\
&=&\frac{1}{\sqrt{2\pi b}}\sum\limits_{n =  - \infty }^{ + \infty }e^{j n  t/b}\sum\limits_{k =  - \infty }^{  \infty }\int_{0}^{\Delta}
f(y+k\Delta) \times \exp\left\{\frac{j}{2b}\left[ a(y+k\Delta)^2+2p(y+k\Delta)\right]-2ny\right\}dy \\
&=& \frac{1}{\sqrt{2\pi b}}\sum\limits_{n =  - \infty }^{  \infty }e^{j n  t/b}\sum\limits_{k = - \infty }^{  \infty }\int_{k\Delta}^{(k+1)\Delta}
f(x) \times \exp\left\{\frac{j}{2b}\left[ ax^2+2mx-2n(x-k\Delta)\right]\right\}dx \\
&=&\frac{1}{\sqrt{2\pi b}}\sum\limits_{n =  - \infty }^{ \infty }\exp\left\{\frac{-j}{2b}\left[ dn^2 +\Omega n-2nt\right]\right\} \times  \int_{-\infty}^\infty f(x)\exp\left\{\frac{j}{2b}\left[ ax^2+ dn^2-2nx +\Omega n+ 2px\right]\right\}dx\\
&=& \sum\limits_{n =  - \infty }^{ + \infty }\exp\left\{\frac{-j}{2b}\left[ dn^2 +\Omega n-2nt\right]\right\}F_{\bsls}(n).
\end{eqnarray*}}
That is,
{\small
\begin{align*}
& {\sqrt{2\pi b}}\sum\limits_{k=-\infty}^{+\infty}f(t+k\Delta)\exp\left\{\frac{j}{2b}\left[ a(t+k\Delta)^2+2p(t+k\Delta)\right]\right\}\\
& =\sum\limits_{n =  - \infty }^{ + \infty }\exp\left\{\frac{-j}{2b}\left[ dn^2 +\Omega n-2nt\right]\right\}F_{\bsls}(n).
\end{align*}}
\end{proof}

In particular, if $F_{\bsls}$ is an interpolating function, i.e., $F_{\bsls}(n)=\delta (n)$, then
{\small
$$\sum\nolimits_{k=-\infty}^{+\infty}f(t+k\Delta)\exp\left\{\frac{j}{2b}\left[ a(t+k\Delta)^2+2m(t+k\Delta)\right]\right\}=\frac{1}{\sqrt{2\pi b}}.$$}

\begin{lemma}
If $g$ is bandlimited to $(-1,1)$ in the SAFT domain, then
$$ \sum\nolimits_k \left|g(t_k)\right|^2 =\frac{1}{2\pi b} \int_{\R} \left| G_{\bsls}(\omega)\right|^2 d\omega .$$
\end{lemma}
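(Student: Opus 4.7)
The plan is to reduce the identity to the classical Parseval relation for Nyquist-rate samples of a Fourier-bandlimited function, by peeling off the SAFT chirp factors. Concretely, I would set $\tilde{g}(t) = g(t)\,\exp\{j(at^{2}+2pt)/(2b)\}$; since this is a pure-phase modulation, $|\tilde{g}(t_{k})| = |g(t_{k})|$ for every $k$ and $\|\tilde{g}\|_{L^{2}} = \|g\|_{L^{2}}$. Substituting the SAFT inversion formula for $g$ and pulling the above chirp out of the integral gives
\begin{equation*}
\tilde{g}(t) = \frac{1}{\sqrt{2\pi|b|}}\int_{-1}^{1} \tilde{G}(\omega)\, e^{j\omega t/b}\, d\omega, \qquad \tilde{G}(\omega) = G_{\bsls}(\omega)\, e^{-j(d\omega^{2}+2(bq-dp)\omega)/(2b)},
\end{equation*}
so that, after the change of variables $u = \omega/b$, $\tilde{g}$ becomes (up to a fixed constant) a classical inverse Fourier transform of a function compactly supported in $(-1/b,1/b)$. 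Hence $\tilde{g}$ is classically Fourier-bandlimited and $|\tilde{G}(\omega)| = |G_{\bsls}(\omega)|$ pointwise on $(-1,1)$.

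The second step is to invoke the classical Parseval identity for a Fourier-bandlimited function sampled at its Nyquist spacing $T$: $\sum_{k} |\tilde{g}(kT)|^{2} = \|\tilde{g}\|_{L^{2}}^{2}/T$. Combining with $\|\tilde{g}\|_{L^{2}}^{2} = \|g\|_{L^{2}}^{2} = \|G_{\bsls}\|_{L^{2}}^{2}$ (the last equality by SAFT Parseval stated in the preliminaries) and tracking the $\sqrt{2\pi|b|}$ from the SAFT kernel together with the Jacobian from $u = \omega/b$, all the constants collapse into the claimed right-hand side $\frac{1}{2\pi b}\int_{\mathbb{R}} |G_{\bsls}(\omega)|^{2}\,d\omega$.

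The main obstacle is pure bookkeeping: reconciling the $\sqrt{2\pi}$, the $|b|$, and the Nyquist spacing associated with the SAFT bandlimit $(-1,1)$ with the sample locations $t_{k}$ so that the constant $1/(2\pi b)$ emerges exactly. A fully intrinsic alternative that avoids passing through the classical Fourier setting is to apply the SAFT Poisson summation formula of the preceding theorem to an autocorrelation-type object $h = g \SC g^{\sharp}$, where $g^{\sharp}$ is defined so that the SAFT convolution theorem yields $H_{\bsls}(\omega)$ equal to a pure phase times $|G_{\bsls}(\omega)|^{2}$; since the latter is supported in $(-1,1)$, only the $n=0$ term of the Poisson sum survives, and evaluating at an appropriate $t$ extracts $\sum_{k} |g(t_{k})|^{2}$ on the left and $\frac{1}{2\pi b}\int |G_{\bsls}|^{2}\, d\omega$ on the right in a single step.
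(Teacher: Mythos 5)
Your first route (peel off the chirps and reduce to the classical sampling Parseval identity) is sound in outline and genuinely different from the paper's proof, which stays entirely in the SAFT domain: the paper applies its SAFT Poisson summation formula at $t=0$ to the \emph{pointwise product} $f(t)=\overline{\zeta}(t)g(t)\overline{h}(t)$ with $\zeta(t)=e^{j(at^2+2pt)/2b}$, observes that the chirp cancellation makes the SAFT of $f$ at $\omega=0$ equal to $\frac{1}{\sqrt{2\pi b}}\int_{\R} g(t)\overline{h}(t)\,dt$, and then sets $h=g$ and invokes the SAFT Parseval relation. What your reduction buys is transparency, since everything rests on the standard identity $\sum_k|\tilde g(kT)|^2=\|\tilde g\|_{L^2}^2/T$; what it costs is that the ``pure bookkeeping'' you defer is exactly where the content lies. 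After the substitution $u=\omega/b$ the classical band of $\tilde g$ is $(-1/b,1/b)$, whose Nyquist spacing is $\pi b$, whereas the paper's proof takes $t_k=k\Delta$ with $\Delta=2\pi b$; you must commit to a spacing before the constant can ``collapse,'' and at spacing $\pi b$ the classical identity returns $1/(\pi b)$ while at spacing $2\pi b$ the band $(-1/b,1/b)$ is sub--Nyquist and the identity fails. Until that pairing of band and spacing is nailed down, the first route does not yet deliver the stated constant $1/(2\pi b)$.

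The second, ``fully intrinsic'' route contains a genuine error rather than a bookkeeping gap. If $h=g\SC g^{\sharp}$ is a SAFT convolution chosen so that $H_{\bsls}(\omega)$ is a pure phase times $|G_{\bsls}(\omega)|^2$, then (i) the time--domain values $h(t_k)$ are autocorrelation--type integrals, not $|g(t_k)|^2$, so the left--hand side of the Poisson formula does not produce $\sum_k|g(t_k)|^2$; and (ii) when only the $n=0$ term survives, the right--hand side is the single point value $H_{\bsls}(0)$, i.e.\ a phase times $|G_{\bsls}(0)|^2$, not the integral $\frac{1}{2\pi b}\int_{\R}|G_{\bsls}(\omega)|^2\,d\omega$. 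The paper gets both sides right precisely because it feeds the Poisson formula the pointwise product $g\overline{h}$ (times a chirp): its samples are $g(t_k)\overline{h}(t_k)$, and its SAFT evaluated at $\omega=0$ reduces to $\frac{1}{\sqrt{2\pi b}}\int_{\R} g\overline{h}\,dt$, the zero--frequency value of a Fourier--type transform being the integral of the function. Convolution and pointwise multiplication trade places under the SAFT, and your sketch has them on the wrong sides.
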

\begin{proof}
Let $f$ be bandlimited to $(-1,1)$ in the SAFT domain.  The Poisson summation formula for $t=0$ and $t_k=k\Delta$ becomes
\begin{equation}
 {\sqrt{2\pi b}}\sum_k f(t_k)\zeta(t_k)=F_{\bsls} (0), \label{PSF2}
 \end{equation}
where $\zeta (t)= \exp\left\{ \frac{j}{2b}\left( at^2+2pt\right) \right\}.$ Setting $f(t)=\overline{\zeta} (t) g(t)\overline{h}(t)$ in the last equation yields
$$ \sum\nolimits_k g(t_k)\overline{h}(t_k)=\frac{1}{{\sqrt{2\pi b}}}F_{\bsls} (0), $$
but
\begin{eqnarray*}
 F_{\bsls}(\omega)&=& \frac{1}{{\sqrt{2\pi b}}}\int_{\R} \overline{\zeta}(t) g(t)\overline{h}(t)\exp\left\{\frac{j}{2b}\left[ at^2+d\omega^2 +\Omega \omega -2\omega t+2pt\right]\right\}dt\\
 &=& \frac{1}{{\sqrt{2\pi b}}}\int_{\R} g(t)\overline{h}(t)\exp\left\{\frac{j}{2b}\left[ d\omega^2 +\Omega \omega -2\omega t \right]\right\}dt;
 \end{eqnarray*}
 hence
 $$F_{\bsls}(0)=\frac{1}{\sqrt{2\pi b}}\int_{\R} g(t)\overline{h}(t)dt .$$
 Therefore, we have
 $$\sum_k g(t_k)\overline{h}(t_k) =\frac{1}{2\pi b}\int_{\R} g(t)\overline{h}(t)dt,$$
 which leads to
 $$\sum_k \left|g(t_k)\right|^2=\frac{1}{2\pi b}\int_{\R} \left|g(t)\right|^2=\frac{1}{2\pi b}\int_{\R} \left|G_{\bsls}(\omega)\right|^2 .$$
\end{proof}

\section{Shannon's Sampling Theorem and the SAFT: \\
Reinterpretation, Extension and Applications}
In this section, we will provide a formal link between Shannon's sampling theorem and (SAFT--) convolution based least--squares approximation. The key idea is that modulated and shifted versions of the $\sinc$--function form a SAFT--bandlimited subspace.  We then discuss strategies for reconstruction/interpolation of functions using arbitrary basis functions---not necessarily $\sinc$ functions. This serves as an extension of the sampling theorem to generic basis functions. Finally, we conclude this section with an application of fractional delay filtering of SAFT bandlimited functions.

\subsection{Revisiting Shannon's Sampling Theorem for SAFT Domain}

A construction of Shannon's sampling theorem for the SAFT was presented in  \cite{Stern2007,Xiang2012,Xiang2013}. Here we present a different proof of the sampling theorem that uses the theory of reproducing-kernels.

Recall that if $f$ is bandlimited to $[-\sigma, \sigma],$ in the Fourier transform domain, then
$$ f(t)= \sum_n f(t_n)\frac{ \sin\left[\sigma(t-t_n)\right]}{\sigma(t-t_n)}, \quad t_n=n\pi/\sigma$$ if and only if
$$ \hat{f}(\omega)=\sqrt{\pi/(2\sigma^2) }\left(\sum_n f(t_n)e^{jt_n \omega}\right)\chi_{[-\sigma, \sigma]}, $$
where $\chi_{\bsls}$ is the characteristic function of $A.$

Likewise, one can prove that if $f$ is bandlimited to $[-\sigma, \sigma],$ in the SAFT domain, then
$$ \tilde{f}(t)= \sum_n \tilde{f}(t_n)\frac{ \sin\left[\sigma(t-t_n)/b\right]}{\sigma(t-t_n)/b}, \quad t_n=n\pi b /\sigma$$ if and only if
$$ F_{\bsls}(\omega)=\frac{\pi b}{\sigma} F_{\bsls}(\left\{f(t_n)\right\})\chi_{[-\sigma, \sigma]}, $$
where $F_{\bsls}(\left\{f(t_n)\right\}$ is the SAFT of the sequence $\left\{f(t_n)\right\}$ as given by Definition \ref{DiscSFAT} and $\tilde{f}(t)=\overline{\zeta} (t) f(t).$ Although the sampling theorem can be derived either directly or from the Poisson summation formula as
in \cite{Butzer2011}, we will derive it using the theory of reproducing-kernels.

\begin{theorem} \label{samplingthm} If $f$ is bandlimited to $[-\sigma, \sigma]$ in the SAFT domain, then
\begin{equation}
 \tilde{f}(t)=\sum_k \tilde{f}(t_n) \frac{\sin \left[ \frac{\sigma}{b}\left( t-t_n \right)\right]}{\frac{\sigma}{b}\left( t-t_n\right)}, \label{samplingformula}
 \end{equation}
where $$ \tilde{f}=\overline{\zeta} (t) f(t),\quad \mbox{ and } t_n =n\pi b/\sigma ,$$ or
\begin{equation}
f\left( t \right) = {e^{ \jmath \frac{{a{t^2}}}{{2b}}}}\sum_{k\in \Z} {f\left( {kT} \right){e^{-\jmath \frac{{a{{\left( {kT} \right)}^2}}}{{2b}} + \jmath p\frac{{t - kT}}{b}}}\operatorname{sinc} \left( { t/T - k} \right)}
\label{syn}
\end{equation}
where $kT=t_k.$
\end{theorem}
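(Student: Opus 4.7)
The plan is to prove (\ref{samplingformula})--(\ref{syn}) by exhibiting the SAFT--bandlimited space
$$
\mathcal B_\sigma^{\bsls}:=\left\{f\in L^2(\R):\operatorname{supp} F_{\bsls}\subseteq[-\sigma,\sigma]\right\}
$$
as a reproducing--kernel Hilbert space and expanding $f\in\mathcal B_\sigma^{\bsls}$ in a natural orthonormal basis obtained by pulling back the Fourier basis of $L^2[-\sigma,\sigma]$ through the inverse SAFT. Since the SAFT is unitary on $L^2(\R)$ by the Parseval identity recorded in Section~1, its restriction to $\mathcal B_\sigma^{\bsls}$ is an isometric isomorphism onto $L^2[-\sigma,\sigma]$; in particular, point evaluation on $\mathcal B_\sigma^{\bsls}$ is bounded, and any orthonormal basis of $L^2[-\sigma,\sigma]$ lifts through the inverse SAFT to an orthonormal basis of $\mathcal B_\sigma^{\bsls}$.

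The first step is to compute this pulled-back basis explicitly. I would take the standard trigonometric system $\varphi_n(\omega)\propto e^{\jmath n\pi\omega/\sigma}\ind_{[-\sigma,\sigma]}(\omega)$, which is orthonormal in $L^2[-\sigma,\sigma]$, and set $\psi_n(t)$ to be its inverse SAFT. Splitting the inverse--SAFT kernel into the three factors
$$
k^{-1}(t,\omega)=\overline\zeta(t)\,\overline\eta_{\bsls}(\omega)\,e^{\jmath t\omega/b}/\sqrt{2\pi|b|},
$$
with $\zeta(t)=e^{\jmath(at^2+2pt)/(2b)}$ and $\eta_{\bsls}(\omega)=e^{\jmath(d\omega^2+\Omega\omega)/(2b)}$, the chirp $\overline\eta_{\bsls}(\omega)$ multiplies the trigonometric basis inside the $\omega$--integral; after completing the square in $\omega$ and integrating over the finite interval $[-\sigma,\sigma]$ one obtains
$$
\psi_n(t)\;=\;C\,\overline\zeta(t)\,\zeta(t_n)\,\sinc\!\Big(\tfrac{\sigma(t-t_n)}{\pi b}\Big),\qquad t_n=\tfrac{n\pi b}{\sigma},
$$
for a universal normalizing constant $C$---i.e.~a doubly chirp--modulated shift of the classical sinc, as expected from the shape of (\ref{syn}).

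Finally, I would expand $f=\sum_n\langle f,\psi_n\rangle\psi_n$ using this basis. By the SAFT Parseval identity, $\langle f,\psi_n\rangle=\langle F_{\bsls},\varphi_n\rangle$ is (up to $C$) the $n$--th Fourier coefficient of $F_{\bsls}$ on $[-\sigma,\sigma]$. Evaluating the expansion at $t=t_m$ and using $\sinc(n-m)=\delta_{n,m}$ identifies this coefficient with $\tilde f(t_n)=\overline\zeta(t_n)f(t_n)$; substituting back into $f=\sum_n\langle f,\psi_n\rangle\psi_n$ yields (\ref{samplingformula}), and rewriting $\overline\zeta(t)\zeta(t_n)=e^{\jmath a(t_n^2-t^2)/(2b)}e^{\jmath p(t_n-t)/b}$ re-expresses the identity in the compact chirped form (\ref{syn}). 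The main obstacle is the chirp bookkeeping: one must carefully track the three chirps $\zeta(t)$, $\eta_{\bsls}(\omega)$, and $\zeta(t_n)$ through the inverse--SAFT integral defining $\psi_n$, verify that $\overline\eta_{\bsls}(\omega)$ combines with the trigonometric linear phase to leave (after completing the square) a pure sinc in $t-t_n$, and show that the raw samples $f(t_n)$ enter the expansion only via the chirp--weighted values $\tilde f(t_n)$; once this bookkeeping is settled, the remaining manipulations reduce to Shannon interpolation on the pulled--back Fourier basis.
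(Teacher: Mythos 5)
Your overall strategy coincides with the paper's: both proofs realize the SAFT--bandlimited functions as a reproducing--kernel Hilbert space spanned by an orthonormal family of chirp--modulated shifted sinc functions $\phi_n$, expand $f$ in that family, and identify the coefficients with the samples by evaluating at $t=t_m$ and using $\phi_n(t_m)\propto\delta_{n,m}$. Your route to the family---pulling back an orthonormal basis of $L^2[-\sigma,\sigma]$ through the unitary inverse SAFT---is a cleaner justification of orthonormality and completeness than the paper's ``it is easy to verify,'' so the difference is one of bookkeeping, not of method.

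There is, however, a concrete gap in the step that produces $\psi_n$. The inverse--SAFT kernel carries the quadratic chirp $\overline{\eta}_{\mathbf{A}}(\omega)=e^{-\jmath(d\omega^2+\Omega\omega)/(2b)}$ in the frequency variable, so pulling back the \emph{pure} trigonometric system gives
\[
\psi_n(t)\;\propto\;\overline{\zeta}(t)\int_{-\sigma}^{\sigma}e^{-\jmath d\omega^2/(2b)}\,e^{\jmath\omega\left(t/b+n\pi/\sigma-\Omega/(2b)\right)}\,d\omega ,
\]
which for $d\neq 0$ is a Fresnel--type (incomplete Gaussian) integral, not a sinc: completing the square only disposes of a quadratic phase when the integration runs over all of $\R$; over the finite interval $[-\sigma,\sigma]$ the shifted limits leave $t$--dependent error--function factors. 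The repair is to chirp the frequency--domain basis first, taking $\varphi_n(\omega)\propto\eta_{\mathbf{A}}(\omega)\,e^{-\jmath n\pi\omega/\sigma}\,\ind_{[-\sigma,\sigma]}(\omega)$, which is still orthonormal and complete in $L^2[-\sigma,\sigma]$ since $|\eta_{\mathbf{A}}|=1$; the kernel's $\overline{\eta}_{\mathbf{A}}(\omega)$ then cancels exactly and the surviving integral $\int_{-\sigma}^{\sigma}e^{\jmath\omega(t-t_n)/b}\,d\omega$ is a genuine sinc. This recovers precisely the paper's $\phi_n$, whose SAFT is a chirped indicator (the content of property (P2) later in the paper). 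With that correction your chirp bookkeeping needs one more pass as well: the $t_n$--dependent phase naturally sits in the coefficient ($c_n\propto\overline{\zeta}(t_n)f(t_n)$ in the paper) rather than in $\psi_n$, and the sign of the $\zeta$--chirp multiplying the sinc must match the kernel convention so that evaluation at $t_m$ yields $\tilde{f}(t_n)$ and not its conjugate--chirped counterpart.
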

\begin{proof}
It is easy to verify that the functions
$$
\phi_n(t)=\sqrt{\sigma/\pi b} \exp\left\{j\left[at^2+2pt\right]/2b\right\}\frac{\sin\left[ \sigma (t-t_n)/b\right]}{\sigma (t-t_n)/b},
$$
form a complete orthonormal family in $L^2(\R)$ with respect to the inner product
$$\langle \phi_m , \phi_n \rangle  =\int_{\R} \phi_m(t)\overline{\phi}_n (t) dt=\delta_{m,n}, $$
and consequently we have from the theory of reproducing kernels that the reproducing kernel of the space of functions bandlimited to $[-\sigma, \sigma ]$ in SAFT domain is
$$ k(x,t) =\sum\nolimits_n \phi_n(t)\overline{\phi}_n (x) .$$
Hence, one can easily verify that
$$ k(x,t)= e^{\frac{j}{2b}\left[a(x^2-t^2)+2p(x-t) \right]}\frac{\sin \left[ \frac{\sigma}{b}\left( x-t \right)\right]}{\pi \left( x-t\right)}, $$ so that
$$  f(t) =\int_{\R} f(x)k(x,t) dx . $$
Thus,
\begin{eqnarray*}
f(t) &=& \int_{\R} f(x)\sum \phi_n(t) \overline{\phi}_n(x) dx\\
&=& \sum \phi_n(t) \int_{\R} f(x) \overline{\phi}_n(x) dx .
\end{eqnarray*}
But because $\phi_n (t_k)=\sqrt{\sigma/\pi b}\,\zeta(t_k)\delta_{n,k},$ we have
$$ f(t_k)=\sum \phi_n(t_k) \int_{\R} f(x) \overline{\phi}_n(x) dx = \sqrt{\sigma/\pi b}\,\zeta(t_k)\int_{\R} f(x) \overline{\phi}_k(x) dx, $$
and it follows that
$$ f(t)=\sqrt{\pi b/\sigma} \sum f(t_n)\,\overline{\zeta}(t_n)\phi_n(t) ,$$ which is Eq. (\ref{samplingformula}).
\end{proof}

 The sampling theorem can be put in the following form: Let $f\l t\r $ be a continuous time signal such that $| \widehat f \l \omega \r | = 0, | \omega | > \sigma$. Then, $f\l t \r$ is completely determined by its equidistant samples spaced $T = \pi b / \sigma$ seconds apart.
\label{thm:ShannonSAFT}
The reconstruction formula  is then specified by Eq. (\ref{syn}).

\begin{remark}[Generalization of Shannon's Sampling Theory] An immediate consequence of Theorem \ref{samplingthm} is that it applies to all the unitary transforms listed in Table \ref{tab:1}.
\end{remark}
The sampling theorems for the Fresnel Transform \cite{Gori}, fractional Fourier transform \cite{Xia1996, Zayed1996,Bhandari2010, Bhandari2012} and the linear canonical transform \cite{Tao2008, Zhao2009, Shi2012} are all now a straight--forward consequence of Theorem~\ref{samplingthm}. For example, the sampling formula for the LCT can be easily obtained from our sampling theorem by setting $p=0$ to yield
$$f\left( t \right) = {e^{ \jmath \frac{{a{t^2}}}{{2b}}}}\sum_{k\in \Z} {f\left( {kT} \right){e^{-\jmath \frac{{a{{\left( {kT} \right)}^2}}}{{2b}} }}\operatorname{sinc} \left( { t/T - k} \right)}.$$

Unlike previous approaches in \cite{Stern2007,Xiang2012,Xiang2013}, we will show that sampling of SAFT--bandlimited signals amounts to orthogonal projection of the signal onto a subspace of SAFT bandlimited functions. Even more so, the projection amounts to filtering/SAFT--convolution of the signal with low--pass filter followed by the sampling step. The reconstruction process in (\ref{syn}) is simply a semi--discrete SAFT convolution.

Let ${\varphi _0} = \varphi$. Consider basis functions of form,
\begin{equation}
{\varphi _n}\left( t \right) = \frac{1}{\sqrt{T}}\cdot{e^{ - \jmath \frac{{a{t^2} - a{{\left( {nT} \right)}^2}}}{{2b}}}}{e^{ - \jmath p\frac{{t - nT}}{b}}}\operatorname{sinc} \left( {{t}/{T} - n} \right).
\label{basis}
\end{equation}

The family $\{ \varphi_n\}_{n \in\mathbb{Z}}$ has two interesting properties. \\

\noindent\textbf{(P1) Orthonormality:} By construction, the basis functions are orthonormal. This is easy to check. Assuming $T=1$, we have, $\left\langle {{\varphi _n},{\varphi _k}} \right\rangle  = {\delta _{n - k}}$ \cite{unser2000},
$$  \left\langle {{\varphi _n},{\varphi _k}} \right\rangle  = {e^{\jmath \frac{{a{{\left( n \right)}^2} - a{{\left( k \right)}^2}}}{{2b}}}}{e^{\jmath p\frac{{\left( {n - k} \right)}}{b}}}\underbrace {\int {\operatorname{sinc} \left( {t - n} \right)\operatorname{sinc} \left( {t - k} \right)dt} }_{{\delta _{n - k}}}.$$
\noindent\textbf{(P2) Bandlimitedness:} Let $T\sigma = \pi b$. Using (\ref{SAFTD}), we have,
 $$\widehat \varphi\left( \omega  \right) = \frac{{{e^{\frac{\jmath }{{2b}}\left( {d{\omega ^2} + 2\Nu{\omega}} \right)}}}}{{\sqrt {2\pi \left| b \right|} }} \ind_{[-\sigma,\sigma]} \l \omega \r, \quad \sigma  = \frac{{\pi b}}{T}$$
 where $\ind_{\mathrm{ D}} \l \cdot \r$ is the characteristic function of the set $\mathrm{D}$. Since,
$$\left | \widehat \varphi_{\bsls}\left( \omega  \right) \right| =\left |  \frac{\ind_{[-\sigma,\sigma]} \l \omega \r} { {\sqrt {2\pi \left| b \right|} }} \right |, $$
we observe that $\widehat\varphi \l \omega \r$ is a bandlimited function.

Let us define the space of SAFT--bandlimited functions,
\[\mathcal{V}_{\bsl}\l \varphi \r = \left\{ {f\left( t \right) = \sum\limits_{k =  - \infty }^{ \infty } {f\left( t_k \right){\varphi _n}\left( t \right): \left\{f(t_k)\right\} \in {\ell _2}} } \right\}.\] Thanks to \textbf{(P1)}, for any $f \in L_2$, the orthogonal projection operator $P_{\mathcal{V}_{\bsl}\l \varphi \r}: L_2 \rightarrow {\mathcal{V}_{\bsl}\l \varphi \r}$ prescribed by,
\[P_{\mathcal{V}_{\bsl}\l \varphi \r}f = \sum\limits_{k =  - \infty }^{\infty } {c\left[ k \right]{\varphi _n}\left( t \right)} \quad \mbox{with} \quad c = \left\langle {f,{\varphi _n}} \right\rangle \]
results in Shannon's Sampling theorem associated with the SAFT domain. The inner--product of the function $f$ with the basis--functions $\varphi_n$ is equivalent to pre-filtering (in sense of SAFT convolution) with the ideal low--pass filter, followed by uniform sampling. Let us define the low--pass kernel/filter $\psi \left( t \right) = \sqrt {2\pi \left| b \right|} {e^{ - j\frac{{a{t^2}}}{{2b}}}}{e^{ - jp\frac{t}{b}}}{\mathop{\rm sinc}\nolimits} \left( { - t/T} \right)$. Mathematically, we have the following relation,
\[\left\langle {f,{\varphi _n}} \right\rangle  = {\left. {\left( {f{ *_{\bsls} }\psi } \right)\left( t \right)} \right|_{t = kT,k \in \mathbb{Z}}}.\]
 This is because,
\begin{align*}
  \left\langle {f,{\varphi _n}} \right\rangle & = \int {f\left( \tau  \right)\varphi _n^*\left( \tau  \right)d\tau }  \hfill \\
  & = \int {f\left( \tau  \right){e^{j\frac{{a{\tau ^2} - a{{\left( {kT} \right)}^2}}}{{2b}}}}{e^{jp\frac{{\tau  - kT}}{b}}}\operatorname{sinc} \left( {\frac{\tau }{T} - k} \right)d\tau }  \hfill \\
  & = {e^{ - j\frac{{a{{\left( {kT} \right)}^2}}}{{2b}}}}\int {f\left( \tau  \right){e^{j\frac{{a{\tau ^2}}}{{2b}}}}{e^{jp\frac{{\tau  - kT}}{b}}}\operatorname{sinc} \left( {\frac{\tau }{T} - k} \right)d\tau }  \hfill \\
  %& = {e^{ - j\frac{{a{{\left( {kT} \right)}^2}}}{{2b}}}}\int {f\left( \tau  \right){e^{j\frac{{a{\tau ^2}}}{{2b}}}}{e^{ - jp\frac{{kT - \tau }}{b}}}\phi \left( {\frac{{kT - \tau }}{T}} \right)d\tau }  \hfill \\
  %& = {e^{ - j\frac{{a{t^2}}}{{2b}}}}\left( {f\left( t \right){e^{j\frac{{a{t^2}}}{{2b}}}} * {e^{ - jp\frac{t}{b}}}\phi \left( {\frac{t}{T}} \right)} \right) \hfill \\
  & = {\left. {\left( {f{*_{\bsls} }\psi } \right)\left( t \right)} \right|_{t = kT,k \in \mathbb{Z}}},
\end{align*}
which leads to Shannon's Sampling theorem for the SAFT domain. More precisely,
\begin{align*}
P_{\mathcal{V}_{\bsl}\l \varphi \r}f  &= \sum\limits_{k =  - \infty }^{\infty } {\left\langle {f,{\varphi _k}} \right\rangle {\varphi _k}\left( t \right)}  \hfill \\
  %& = \sum\limits_{k =  - \infty }^{k =  + \infty } {{{\left. {\left( {f{ * _\Lambda }\psi } \right)\left( t \right)} \right|}_{t = kT}}{\varphi _k}\left( t \right)}  \hfill \\
  & = \sum\limits_{k =  - \infty }^{ \infty } {{{\left. {\left( {f{ \SC }\psi } \right)\left( t \right)} \right|}_{t = kT}}{\varphi _k}\left( t \right)}  \hfill \\
  & = {e^{-j\frac{{a{t^2}}}{{2b}}}}\sum\limits_{k =  - \infty }^{ \infty } {f\left( {kT} \right){e^{j\frac{{a{{\left( {kT} \right)}^2}}}{{2b}}}}{e^{ - jp\frac{{t - kT}}{b}}}\operatorname{sinc} \left( {\frac{t}{T} - k} \right)},
\end{align*}
where we have used the property of bandlimited functions,
\[ \widehat f\left( \omega  \right) \in {\ind_{\left[ { - \frac{{\pi b}}{T}, + \frac{{\pi b}}{T}} \right]}}\left( \omega  \right) \Rightarrow \left( {f{ \SC }\psi } \right)\left( t \right) = f\left( t \right).\]
%
%
%\begin{enumerate}
	%\item pre-filtering with filter, $${\varphi _n}\left( t \right) = {e^{ - \jmath \frac{{a{t^2} - a{{\left( {nT} \right)}^2}}}{{2b}}}}{e^{ - \jmath p\frac{{t - nT}}{b}}}\phi \left( {\frac{t}{T} - n} \right),$$
	%where $\phi\l t \r = \operatorname{sinc}\l -t \r$.
 %\item followed by sampling.
%\end{enumerate}
%
%
%
Hence, Shannon's sampling theorem for SAFT domain can be reinterpreted as orthogonal projection of a signal onto a subspace of SAFT--bandlimited signals. This interpretation provides an approximation  of the sampling theorem that is not discussed in \cite{Stern2007, Xiang2012, Xiang2013}. Due to the involvement of the orthogonal projection operator, the approximation of $f \in L_2$, that is, $\tilde f = P_{\mathcal{V}_{\bsl}\l \varphi \r}f$ is optimal in the least--squares sense since,
\[\widetilde f = \arg \mathop {\min }\limits_x \left\| {f - x} \right\|_{{L_2}}^2.\]
In \cite{Stern2007, Xiang2012, Xiang2013}, the authors discuss the case of interpolation, i.e.,
$c[m] = f\l mT \r, m\in \mathbb{Z}$. Our setup considers a general setting where $f \in L_2$ and $c = \left\langle {f,{\varphi _n}} \right\rangle$.

Also, by invoking the semi--discrete convolution operator, the approximation of a signal $f\in L_2$ is characterised by,
\[
P_{\mathcal{V}_{\bsl}\l \varphi \r}f = {\left. {\left( {f{ \SC }\psi } \right)} \right|_{t = kT}}\SDC \varphi\l t \r \mbox{ \small{(Least--Squares Optimal) }}.
\]

\subsection{Reconstruction with Arbitrary Basis Functions}
In many cases of interest and for practical purposes, one is interested in the expansion of some function $f$ (not necessarily SAFT--bandlimited) in terms of some basis function $\nu$ such that $\psi \left( t \right) = \sqrt {2\pi \left| b \right|} {e^{ - j\frac{{a{t^2}}}{{2b}}}}{e^{ - jp\frac{t}{b}}}\nu \left( t \right)$. In view of Theorem \ref{Thm:Riesz}, we assume that the basis functions generated by $\psi$ form a Riesz basis. More precisely, we consider expansions of form,
\begin{equation}
f\left( t \right) = {e^{ - j\frac{{a{t^2}}}{{2b}}}}\sum\nolimits_k {{p\left[ k \right]}{e^{j\frac{{a{k^2}}}{{2b}}}}{e^{ - jp\frac{{\left( {t - k} \right)}}{b}}}\nu \left( {t - k} \right)}  \Leftrightarrow \left( {p\SDC\psi } \right),
\label{interp}
\end{equation}
which generalizes the representation in (\ref{syn}). Indeed, for the case of sampling series in SAFT domain, we have $p \left[ k \right] = f\l t \r|_{t=k, k\in \Z}$ and $\nu = \sinc$. In order to compute the proxy of Shannon's samples, that is the weights $\{ p \left[ k \right] \}_k$ relative to the basis functions $\nu$, we discretize (\ref{interp}) and obtain,
\begin{equation}
\label{weights}
{f\l k \r} = p\left[ k \right]\SC\left( {\sqrt {2\pi \left| b \right|} {e^{ - j\frac{{a{k^2}}}{{2b}}}}{e^{ - jp\frac{k}{b}}}\nu \left( k \right)} \right) \equiv \left( {p\SC\psi } \right)\left[ k \right].
 %\Rightarrow & p\left[ k \right] = \left( {f*\sqrt {2\pi \left| b \right|} {e^{ - j\frac{{a{k^2}}}{{2b}}}}{e^{ - jp\frac{k}{b}}}\vartheta \left( k \right)} \right),
\end{equation}

Now, assume that there exists a sequence $\vartheta\left[ k \right]$ such that,
\begin{equation}
\label{idf}
p\left[ k \right] = \left( {f\SC\sqrt {2\pi \left| b \right|} {e^{ - j\frac{{a{k^2}}}{{2b}}}}{e^{ - jp\frac{k}{b}}}\vartheta \left[ k \right]} \right).
\end{equation}
Back substituting weights $p\left[ k \right]$ from (\ref{idf}) in (\ref{interp}) results in,
\begin{equation}
\label{AB}
f\left( t \right) = {e^{ - j\frac{{a{t^2}}}{{2b}}}}\sum\nolimits_k {f\left( k \right){e^{j\frac{{a{k^2}}}{{2b}}}}{e^{ - jp\frac{{\left( {t - k} \right)}}{b}}}\underbrace {\left( {\vartheta *\nu } \right)\left( {t - k} \right)}_{{\textsf{New Basis Functions}}}}.
\end{equation}
Hence the representation of $f\l t \r$ in the basis $\nu$ is equivalent to interpolation of samples of $f$ with new basis functions ${\psi _{{\sf{new}}}}\left( t \right) = \sqrt {2\pi \left| b \right|} {e^{ - j\frac{{a{t^2}}}{{2b}}}}{e^{ - jp\frac{t}{b}}}\left( {\vartheta *\nu } \right)\left( t \right)$ and therefore,
\begin{equation}
\label{equivalent}
f\left( t \right) = \left( {p\SDC\psi } \right)\left( t \right) \Leftrightarrow \left( {f\SDC{\psi _{{\sf{new}}}}} \right)\left( t \right).
\end{equation}
The relation between the sequence $\vartheta$ and the basis functions $\nu$ is obtained by sampling (\ref{AB}),
\[
{\left. {f\left( t \right)} \right|_{t = m}} = {e^{ - j\frac{{a{m^2}}}{{2b}}}}\sum\nolimits_k {f\left( k \right){e^{j\frac{{a{k^2}}}{{2b}}}}{e^{ - jp\frac{{\left( {m - k} \right)}}{b}}}\left( {\vartheta *\nu } \right)\left( {m - k} \right)},
\]
which amounts to the interpolation condition, that is,
\begin{align}
\label{IDFcondition}
 \left( {\vartheta *\nu } \right)\left( {m - k} \right) & = {\delta _{m, k}} \quad \mbox{\small (Interpolation Condition)}\\
 \Rightarrow {\left. {f\left( t \right)} \right|_{t = m}} & = {e^{ - j\frac{{a{m^2}}}{{2b}}}}\sum\nolimits_k {f\left( k \right){e^{j\frac{{a{k^2}}}{{2b}}}}{e^{ - jp\frac{{\left( {m - k} \right)}}{b}}}\delta \left( {m - k} \right)}  \notag \\
& = f\left( m \right) \notag
\end{align}
The impulse response of the inverse discrete filter $\vartheta$ is related to the basis functions $\nu$ by a simple Fourier condition,
\begin{equation}
\label{FourierIDF}
\left( {\vartheta *\nu } \right)\left( m \right) \DEq{IDFcondition} {\delta _m} \Leftrightarrow \widehat \vartheta \left( {{e^{j\omega }}} \right)\sum\nolimits_k {\widehat \nu \left( {\omega  + 2\pi k} \right)}  = 1.
\end{equation}
The existence of a stable sequence $\vartheta$ is guaranteed because we assume that $\psi$ (and hence $\nu$) generates Riesz basis and hence for all $\omega \in \mathbb{R}, \ \ \sum\nolimits_k {\widehat \nu \left( {\omega  + 2\pi k} \right)} \neq 0$.

\subsection{Application: Fractional Delay Filtering}
Fractional Delay Filters (FDF) have found applications in a wide range of problems linked with signal processing \cite{Laakso1996,Bhandari2010a}. The FDF problem can be described as follows: Given samples of some finite--energy signal $f\l t \r$, how can one estimate the samples of $\tau$--delayed version of $f$, that is $f^{\tau}\l nT\r \DE f\l t-\tau\r |_{t=nT}, n\in \Z, \tau \in [0,T]$? For the shift--invariant subspace model linked with SAFT domain, we have,
\begin{equation}
\label{FDF}
{f^\tau }\left( t \right) \DEq{interp} \left( {p\SDC\psi } \right)\left( {t - \tau } \right) \DEq{equivalent} \left( {f\SDC{\psi _{{\sf{new}}}}} \right)\left( {t - \tau } \right).
\end{equation}
Since the samples of $f$ are readily available, one option is to set $\nu = \sinc$ which leads to $p\left[ k \right] = f\l kT \r, k\in \Z$. Hence, if $f$ is a SAFT--bandlimited signal, re--samling $\tau$--delayed version of interpolated samples of $f$ leads to desired solution. That said, infinite support and slow--decay of the $\sinc$ filter make this solution impractical for cases when finite number of samples are available or $f$ is not strictly bandlimited. For this purpose, we model $f$ as a shift--invariant signal in SAFT--domain (\ref{interp}). Consequently, we propose to estimate,
\[{f^\tau }\left( {mT} \right) = \left( {p\SDC\psi } \right)\left( {mT - \tau } \right) \approx f\left( {mT - \tau } \right),m \in \Z.\]
This is a two step procedure. Once we set $\nu$ and hence, $\psi$, we compute the inverse discrete filter $\vartheta$. Starting with samples $f\l mT \r$, we compute $p\left[ m \right]$ (\ref{idf}). Then, we compute $\left( {p\SDC\psi } \right)\left( {mT - \tau } \right)$ to estimate delayed signal, $f\left( {mT - \tau } \right)$. The sequence of operations is summarized as follows,
\[f\left( {mT} \right) \to \boxed{{\vartheta }} \DTo{idf} {p} \to \boxed{\psi \left( {mT - \tau } \right)} \DTo{FDF} \left( {p\SDC\psi } \right)\left( {mT - \tau } \right).\]
Next, we present an example of FDF in SAFT Domain.
\subsubsection{Power Cosine Filters for FDF in SAFT Domain} Let $\psi \left( t \right) = \sqrt {2\pi \left| b \right|} {e^{ - j\frac{{a{t^2}}}{{2b}}}}{e^{ - jp\frac{t}{b}}}\nu \left( t \right)$ be as before and,
\begin{equation}
\label{PC}
\nu \left( t \right) = \left( {2/3} \right){\cos ^4}\left( {\pi t/4} \right){\chi _{\left[ { - 2, + 2} \right]}}\left( t \right).
\end{equation}
Let $\widehat \nu\l \omega \r $ be for the Fourier Transform of $\nu\l t \r$. The SAFT of $\psi$ is given by, ${\widehat \psi _{\mathbf{A}}}\left( \omega  \right) = {e^{j\left( {\frac{{d{\omega ^2}}}{{2b}} + \frac{{bq - dp}}{b}\omega } \right)}}\widehat \nu \left( {\omega /b} \right)$. For a unique, stable representation of $f$ in shift--invariant subspace, generated by $\psi$, to exist, the basis functions must form a Riesz basis for SAFT (see Theorem~\ref{Thm:Riesz}). To that end, we must show that the quantity, ${G_{\psi ,{\mathbf{A}}}}\left( \omega  \right) = \sum\nolimits_k {{{\left| {{{\widehat \psi }_{\mathbf{A}}}\left( {\omega  + 2\pi kb} \right)} \right|}^2}}  = \sum\nolimits_k {{{\left| {\widehat \nu \left( {\omega /b + 2\pi k} \right)} \right|}^2}}$ is bounded from below and above. This is indeed the case. Notice that $\widehat \nu \left( \omega  \right) = \sum\nolimits_{\left| k \right| \leqslant 2} {{\rho _k}\operatorname{sinc} \left( {2\omega  - k\pi } \right)}$ with ${\rho _k} = 4/\left( {2 + k} \right)!\left( {2 - k} \right)!$. Let ${G_\nu }\left( \omega  \right) = \sum\nolimits_k {{{\left| {\widehat \nu \left( {\omega  + 2\pi k} \right)} \right|}^2}}$. Since ${G_\nu }\left( \omega  \right)$ is an even symmetric, periodic function, to prove that $G_\nu>0$ (see (\ref{frame}), it suffices to show that $\forall \omega \in [0, \pi]$, $\eta_1 = \inf G_\nu >0$ or equivalently,
\[{\eta _1} = \mathop {\inf }\limits_\omega  \left( {{{\left| {\widehat \nu \left( \omega  \right)} \right|}^2} + {{\left| {\widehat \nu \left( {\omega  - 2\pi } \right)} \right|}^2}} \right),\forall \omega  \in \left[ {0,\pi } \right].\]

Now since $\widehat\nu \l \omega \r$ is a monotonically decreasing function on $[0,\pi]$ and for $\omega = \pi$, $\operatorname{sinc} \left( {2\left( {\omega  + 2\pi m} \right) - k\pi } \right) = 0,\forall m \in \Z - \left\{ 1 \right\}$, we deduce that $\inf_\omega G_\nu$ occurs at $\omega = \pi$. As a result, we have, ${\eta _1} = {G_\nu }\left( \pi  \right) = 2{\left( {\frac{{{{\left( {2!} \right)}^2}}}{4}} \right)^2}$. Using similar argument, we conclude that $\eta_2 = {\sup _\omega }{G_\nu }\left( \omega  \right)$ is computed at $\omega = 0$ which results in $\eta_2 = 1$.

In order to compute the inverse discrete filter $\vartheta$, we use the previously developed property in (\ref{FourierIDF}). With $\nu \left( k \right) = \frac{1}{6}\left( {{\delta _{k - 1}} + 4{\delta _k} + {\delta _{k + 1}}} \right), k\in \Z $, the transfer function of the filter $\vartheta$ results in, $\vartheta \left( {{e^{j\omega }}} \right) = \frac{6}{{{e^{ + j\omega }} + 4 + {e^{ - j\omega }}}}$. Similar to the cubic spline \cite{Bhandari2010a}, the impulse response of such a filter is given by,
\[\vartheta \left[ k \right] =  - \left( {\frac{{6\mu }}{{1 - {\mu ^2}}}} \right){\mu ^{\left| k \right|}},k \in \Z{\text{ with }}\mu  = \sqrt 3  - 2. \]
\subsubsection{Experimental Verification} For experimental verification, we assume uniform samples ${\left\{ {{g_{\sf{sig}}}\left( {kT} \right)} \right\}_k}$ of some signal,
\[{g_{\sf{sig}}}\left( t \right) = {e^{ - j\left( {\frac{{{a_0}}}{{2{b_0}}}{t^2} + \frac{{{p_0}}}{{{b_0}}}t} \right)}}\sum\nolimits_{m = 1}^{m = 3} {{\alpha _k}\cos \left( {2\pi {\omega _k}t} \right)} \]
with weight vector $\boldsymbol{\alpha}  = \left[ {35,18,10} \right]$ and frequency vector $\boldsymbol{\omega}  = \left[ {0.77,0.31,0.25} \right]$ are given. The SAFT parameter vector for the experiment is chosen to be $\left[ {{a_0},{b_0},{c_0},{d_0},{p_0},{q_0}} \right] = [7, 2, 0.6, 0.3143, 2.5, 1]$ and the sampling rate $T =\pi b_0/60$. Using (\ref{FDF}), we estimate the samples $g^{\tau}\l kT\r$ of the function $g_{\sf{sig}}^{\tau}\l t \r = g_{\textsf{sig}} \l t - \tau \r$ for $\tau  = \frac{{mT}}{{10}},m = 1, \ldots ,5.$ We compare the shift--invariant model for SAFT domain with the traditional Shannon's sampling series for SAFT ($\nu = \sinc, \vartheta [k] = \delta_k$). As before, \cite{Bhandari2012}, the metric for measuring distortion is set to be peak--signal--to--noise ratio (PSNR),
\[{\textsf{PSNR }} = 10{\log _{10}}\left( {\tfrac{{\max \left\{ {{{| {g_{{\textsf{sig}}}^\tau \left( {kT} \right)}|}^2}} \right\}}}{{{\textsf{E}}\left\{ | {{g^\tau }\left( {kT} \right)-g_{{\textsf{sig}}}^\tau \left( {kT} \right)} |^2\right\}}}} \right), \ \ \textrm{(in dB)}\]
where $\mathsf{E}\{ \cdot \}$ is the usual expectation operator. Figure~\ref{graphic1} summarizes the result of experimentation. For several choices of $\tau \in [0,T]$, the shift--invariant model for SAFT outperforms the traditional method of filtering.

\begin{figure}[!t]
\centerline{\includegraphics[width=0.6\columnwidth]{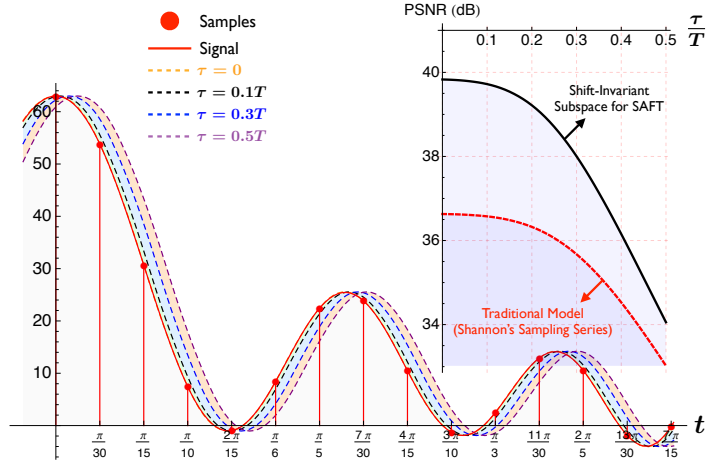}}
\caption{Fractional Delay Filtering of signals in SAFT domain using the shift--invariant signal model. We show equidistant samples of signal ${g_{\textsf{sig}}}\left( t \right) = {e^{ - j\left( {\frac{{{a_0}}}{{2{b_0}}}{t^2} + \frac{{{p_0}}}{{{b_0}}}t} \right)}}\sum\nolimits_{m = 1}^{m = 3} {{\alpha _k}\cos \left( {2\pi {\omega _k}t} \right)}$ acquired at sampling rate $T = \pi/30$. The signal is characterized by coefficient and frequency vectors $\boldsymbol{\alpha}  = \left[ {35,18,10} \right]$ and $\boldsymbol{\omega}  = \left[ {0.77,0.31,0.25} \right]$, respectively. The SAFT parameter vector is set to be $\left[ {{a_0},{b_0},{c_0},{d_0},{p_0},{q_0}} \right] = [{\text{7, 2, 0.6, 0.3143, 2.5, 1]}}$. Using $\nu \left( t \right) = \left( {2/3} \right){\cos ^4}\left( {\pi t/4} \right){\chi _{\left[ { - 2, + 2} \right]}}\left( t \right)$ as the generator of shift--invariant subspace, we estimate the $\tau$--delayed samples $g^{\tau}(kT), k\in\Z$ and compare it with the analytical samples $g^{\tau}_{\textsf{sig}}\l kT\r, k \in \Z$. Given $g_{\textsf{sig}}\l kT \r$, the choice $\tau =0$ amounts to reconstruction/interpolation the signal (see (\ref{interp},\ref{equivalent}). For other choices of $\tau = 0.1T, 0.3T \mbox{ and } 0.5T$, we reconstruct shifted versions of the $g_{\textsf{sig}}\l t \r$. (Inset) Comparison between shift--invariant model for SAFT using $\nu$ in (\ref{PC}) and the classical Shannon's sampling method using $\nu = \sinc$. The comparison metric is chosen to be peak--signal--to--noise ratio. The shift--invariant model for SAFT proves to be a better solution.}
\label{graphic1}
\end{figure}

\section{Conclusions}

We have shown that several properties and harmonic analysis results that pertain to the LCT and FrFT can be extended to the SAFT. For example, we derived  shift-invariant and reproducing-kernel Hilbert spaces associated with the SAFT, as well as, the sampling theorem, the Poisson summation formula, and the Zak transform that correspond to the SAFT. Moreover, we derived necessary and sufficient conditions for a function to be a generator for the shift-invariant space associated with the SAFT.

We would like to emphasize that our derivation of the sampling theorem of the SAFT, which is based on the theory of reproducing-kernel Hilbert spaces, is different from those used for the SAFT, LCT and FrFT; see for example, \cite{Stern2006,Stern2007,Li2007,Healy2009,Xiang2013}; hence, a new proof of the sampling theorem of the LCT may be deduced from our proof.

We have also proved that sampling in the SAFT domain is equivalent to orthogonal projection of functions onto a subspace of bandlimited basis associated with the SAFT domain. This interpretation of sampling leads to least--squares optimal sampling theorem.
Furthermore, we have shown that this approximation procedure is linked to convolution and semi--discrete convolution operators that are associated with the SAFT domain. We have presented a strategy for reconstruction/interpolation of functions using arbitrary basis functions---not necessarily $\sinc$ functions, as is the case generally---that served as an extension of the sampling theorem to generic basis functions. We concluded the article with an application of fractional delay filtering of SAFT bandlimited functions. Our ideas may be studied in context of sparse sampling theory \cite{Bhandari2016a} as well as super-resolution \cite{Bhandari2015SR}.

%\section*{References}
\bibliographystyle{IEEEtran}
\bibliography{SAFTShannon}
\end{spacing}
\end{document}